\def\BibTeX{{\rm B\kern-.05em{\sc i\kern-.025em b}\kern-.08em
    T\kern-.1667em\lower.7ex\hbox{E}\kern-.125emX}}
\definecolor{BrickRed}{RGB}{178,34,34} 
\newtheorem{theorem}{Theorem}
\newtheorem{lemma}{Lemma}
\newtheorem{definition}{Definition}
\def\ps@IEEEtitlepagestyle{%
  \def\@oddfoot{\mycopyrightnotice}%
  \def\@evenfoot{}%
}
\def\mycopyrightnotice{%
  {\footnotesize 
  \begin{minipage}{\textwidth}
  \centering
  \textbf{This work has been submitted to the IEEE for possible publication. Copyright may be transferred without notice, after which this version may no longer be accessible.}
  \end{minipage}
    \hfill}
  \gdef\mycopyrightnotice{}
}
\begin{document}
%
\title{BLOWN: A Blockchain Protocol for Single-Hop Wireless Networks under Adversarial SINR}
%
%
%
%

\author{Minghui~Xu,~\IEEEmembership{Member,~IEEE,}
    Feng~Zhao,~\IEEEmembership{Member,~IEEE,}
    Yifei~Zou, 
    Chunchi~Liu, 
    Xiuzhen~Cheng,~\IEEEmembership{Fellow,~IEEE,}
    Falko~Dressler,~\IEEEmembership{Fellow,~IEEE}
\thanks{M. Xu, Y. Zou, and X. Cheng are with the School of Computer Science and Technology, Shandong University, Qingdao, 266510, P. R. China. E-mail: \{mhxu,yfzou,xzcheng\}@sdu.edu.cn}
\thanks{F. Zhao (Correspondign Author) is with the Guangxi Colleges and Universities Key Laboratory of Complex System Optimization and Big Data Processing, Yulin Normal University, Yulin, P.R. China. E-mail: zhaofeng@guet.edu.cn.}
\thanks{C. Liu is with Ernst \& Young, Shanghai, 200120, P. R. China. E-mail: peter.cc.liu@cn.ey.com.}
\thanks{F. Dressler is with the the School of Electrical Engineering and Computer Science, TU Berlin, 10587 Berlin, Germany. E-mail: dressler@ccs-labs.org.}

\thanks{Manuscript created June 20, 2020}}

\IEEEtitleabstractindextext{%
\begin{abstract}
Known as a distributed ledger technology (DLT), blockchain has attracted much attention due to its properties such as decentralization, security, immutability and transparency, and its potential of servicing as an infrastructure for various applications. Blockchain can empower wireless networks with identity management, data integrity, access control, and high-level security. However, previous studies on blockchain-enabled wireless networks mostly focus on proposing architectures or building systems with popular blockchain protocols. Nevertheless, such existing protocols have obvious shortcomings when adopted in wireless networks where nodes may have limited physical resources, may fall short of well-established reliable channels, or may suffer from variable bandwidths impacted by environments or jamming attacks. In this paper, we propose a novel consensus protocol named Proof-of-Channel (PoC) leveraging the natural properties of wireless communications, and develop a permissioned BLOWN protocol (BLOckchain protocol for Wireless Networks) for single-hop wireless networks under an adversarial SINR model. We formalize BLOWN with the universal composition framework and prove its security properties, namely persistence and liveness, as well as its strengths in countering against adversarial jamming, double-spending, and Sybil attacks, which are also demonstrated by extensive simulation studies.
\end{abstract}

\begin{IEEEkeywords}
Blockchain; Proof-of-Channel; wireless networks; adversarial SINR; jamming; Sybil attacks.
\end{IEEEkeywords}}

\maketitle

\IEEEdisplaynontitleabstractindextext

%
\IEEEpeerreviewmaketitle

\ifCLASSOPTIONcompsoc
\IEEEraisesectionheading{\section{Introduction}\label{sec:introduction}}
\else
\section{Introduction}
\label{sec:introduction}
\fi
\IEEEPARstart{D}istributed Ledger Technology (DLT) refers to share, replicate, and synchronize a digital ledger across a distributed network without centralized data storage. As a widely used DLT, blockchain technologies intend to organize a digital ledger as a chain of blocks to enable remarkable properties such as decentralization, immutability, and traceability. Since Bitcoin has emerged as the first open cryptocurrency, blockchain has been envisioned as a promising technology that can be used in various practical applications such as finance \cite{tapscott2017blockchain}, Internet of Things (IoT) \cite{ali2018applications}, supply chain \cite{korpela2017digital}, and security services \cite{salman2018security}. 
In recent years, the popularity of 5G and IoT has arisen more problems of managing devices, sharing information, and carrying on computing tasks among wireless nodes \cite{soret2015interference}. Such problems become even intractable in a wireless network with small-world and super-dense features \cite{cheng2017ssdnet}. To overcome these challenges, researchers have been making continuous effort to build secure and trusted computing environments such as mobile edge computing enabled blockchain \cite{xiong2018mobile} and the blockchain empowered 5G \cite{dai2019blockchain} in wireless networks taking advantage of blockchain technologies.
As shown in Fig.~\ref{Fig:uav}, one of the most typical application scenarios of wireless blockchain is a single-hop unmanned vehicle network \cite{ghribi2020secure, calvo2018secure, aloqaily2021design}, in which blockchain-based unmanned vehicles can execute precise cooperative operations (by consensus mechanism) based on trusted historical information (using blockchain as a decentralized ledger).  This system can be fault-tolerant, robust, and secure against malicious attacks.


\begin{figure}[htbp!]
\centering
\includegraphics[width=0.45\textwidth]{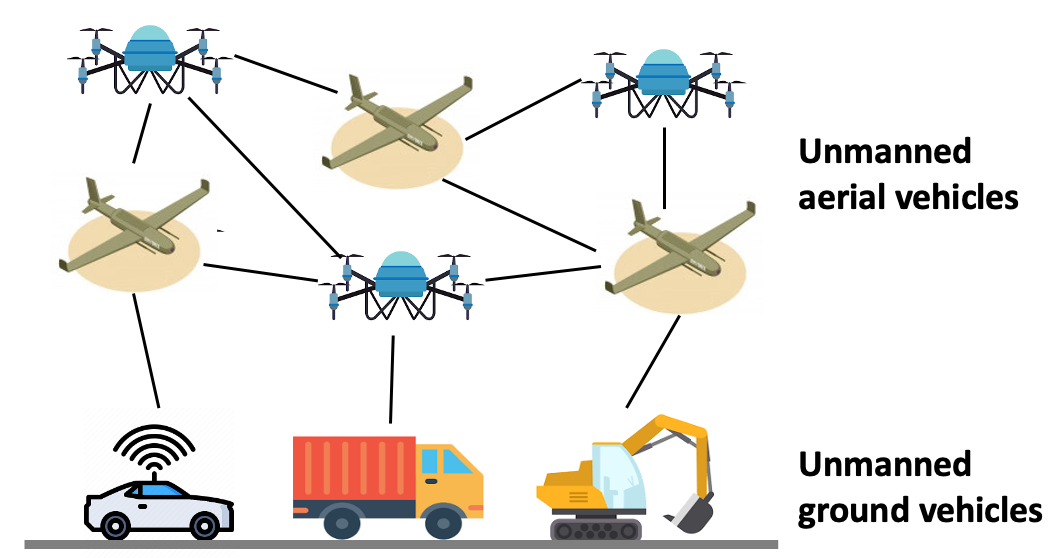}
\caption{Unmanned Vehicles Network.}
\label{Fig:uav}
\end{figure}
Previous studies on blockchain-enabled wireless networks mostly focus on proposing architectures or building systems on top of popular blockchain protocols that are previously deployed on the Internet. Such blockchain protocols make use of consensus algorithms that are based on either proof of resources or  message passing. 
Proof of resources based consensus requires users to compete for proposing blocks by demonstrating their utilization of physical resources such as energy and storage (e.g. Proof-of-Work  \cite{Bitcoin}) or virtual resources such as reputation and weight (e.g., Proof-of-Stake \cite{Ouroboros}). 
Message passing based protocols such as PBFT \cite{pbft}, on the other hand, require the participants to reach consensus through message exchanges. 
Even though these consensus algorithms perform well for existing blockchain protocols, they are not suitable for wireless networks since they are mainly developed for systems with Internet serving as the underlying network infrastructure. The reasons can be concluded as follows: 1) Wireless networks fall short of well-established reliable channels built with physical wires such as fiber as the Internet does -- the open free air communications are severely impacted by environments (e.g., interference or contention) resulting in the variable channel bandwidths and latency. This poses threatens to blockchain consensus process; 2) Even though some of the exiting blockchain protocols do not require strong network synchrony, which means that they operate properly when the transmission delay is bounded, they still need the support of basic media access control protocols (e.g., Carrier sense multiple access with collision avoidance (CSMA/CA)) in wireless networks. CSMA/CA is inefficient to address heavy contention in dense networks, and can cause additional traffic. 3) Wireless networks are particularly vulnerable to jamming attacks. However, existing blockchain protocols fall short of defending jammers efficiently. 
These barriers make it very possible for communications to fail, causing the traditional consensus algorithms inapplicable. Such problems are not sufficiently addressed by existing blockchain protocols, which motivates our study on blockchain over wireless. 

In this paper, we propose BLOWN, a BLOckchain protocol for Wireless Networks, to overcome the above challenges. BLOWN is a two-phase protocol that adopts a new concept, namely Proof-of-Channel (PoC), to seamlessly  integrate the procedures of  consensus and channel competition.  In PoC, nodes compete for available channels to win the rights of proposing blocks. Such a design makes probing the wireless channel conditions part of the consensus procedure,  successfully reducing the communication cost while increasing consensus efficiency and effectiveness.  On the other hand, we consider that an adversary can make adversarial jamming on the nodes  but controls no more than 50\% wealth of the network in BLOWN, where 
wealth is defined to be the total number of coins held by all users. BLOWN is a provably secure system that satisfies two formal security properties: persistence and liveness. Persistence means that if an honest node proclaims a transaction as stable, other honest nodes, if queried, either report the same result or report error messages. Liveness,  on the other hand, states that the transactions originated from the honest nodes can eventually be added to the blockchain. To prove BLOWN's properties, we formally model it with a universally composable (UC) framework and analyze it accordingly. 
Note that it is worthy of emphasizing that PoC can be adapted to multi-hop wireless networks if combined with existing techniques such as distributed spanner construction \cite{yu2017distributed, xu2021wchain}, or supported by an adequate routing layer \cite{awerbuch2002an}.

Our main contributions are summarized as follows.
\begin{enumerate}
\item To the best of our knowledge, BLOWN is the first provably secure protocol that is specifically designed for single-hop wireless networks under a realistic adversarial SINR model. 
\item A novel, general Proof-of-Channel consensus protocol is proposed in this paper, which leverages the natural properties of wireless networks such as broadcast communications and channel competitions. 
\item We develop a UC-style protocol for BLOWN and formally prove BLOWN's persistence and liveness properties by showing that it satisfies concrete chain growth, common prefix, chain quality properties.
\item Finally, extensive simulation studies are conducted to validate our theoretical analysis.
\end{enumerate}

The rest of the paper is organized as follows. Section~\ref{sec:related:Work} introduces the most related works on state-of-the-art blockchain protocols. Section~\ref{sec:model} presents our models and assumptions. In Section~\ref{sec:blown}, the two-phase BLOWN protocol is explained in detail. Security properties of BLOWN are analyzed in Section~\ref{sec:security:analysis}. We report the results of our simulation studies in Section~\ref{sec:simulations} and conclude this paper in Section~\ref{sec:conclusion}.

\section{Related Work}
\label{sec:related:Work}

\textbf{Blockchain consensus protocols.} We classify blockchain consensus protocols into two categories: proof of resources (virtual or physical) and message passing, and overview state-of-the-arts in this section. For a more comprehensive survey we refer the readers to \cite{DBLP:journals/comsur/XiaoZLH20}.  

Proof of physical resources requires that users compete for proposing blocks by demonstrating their utilization of physical resources. Proof-of-Work (PoW) is of the most use in blockchain. The most popular example of PoW-based blockchain is the Bitcoin proposed in 2008, which selects leaders by mining power \cite{Bitcoin}. Ethereum provides the Turning-complete Ethereum Virtual Machine (EVM) and adopts a modified PoW (with Ethash) \cite{wood2014ethereum}. Free-pool mining~\cite{shi2020fee} was proposed for PoW to incentivize miners to behave cooperatively. Alternatives to PoW include Proof-of-Space \cite{dziembowski2015proofs}, Proof-of-Burn (PoB) \cite{proofofburn}, Proof-of-Elapsed Time (PoET) \cite{Hyperledger},  etc., in which Proof-of-Space, also known as Proof-of-Capacity or Proof-of-Storage, refers to consensus nodes competing by occupied memories or disk spaces, PoB means that a node can destroy coins to virtually earn mining rights, and PoET, proposed by Intel, leverages trusted hardware (e.g., SGX) to determine how long a node has to wait before it is allowed to generate a block.

In contrast to proof of physical resources, proof of virtual resources aims to show the utilization of virtual resources such as reputation, stake, or elaborately defined weight. For example, Proof of Stake (PoS) was developed to address the power consumption issue of PoW and it resorts to stakes as voting rights rather than computational powers. Algorand uses a cryptographic Sortition algorithm to randomly select a verifiable committee according to stakes \cite{gilad2017algorand}. IOHK created the Ouroboros family in recent years, which adopts PoS and G.O.D coin tossing to randomly choose a leader according to  stakes  \cite{Ouroboros}. Snow White utilizes an epoch-based committee which embodies successful block miners in a specific time period so that all nodes have an identical view of the committee \cite{bentov2016snow}. In Proof-of-Reputation (PoR), each node is assigned a reputation \cite{miller2014permacoin}, and a node can write blocks only when its reputation meets certain requirements; thus PoR always comes with incentive mechanisms or economic penalty strategies.  

In message passing based blockchain protocols, nodes can perform local computations and broadcast messages to each other to reach consensus. This method provides blockchain the robustness to Byzantine failures while ensuring liveness and safety. In Ripple, a transaction that receives more than 80\% votes from UNL-recorded servers can step into the next round, and transactions having survived through the whole RPCA process can be added to the blockchain \cite{Ripple}. ELASTICO partitions nodes by their unique identities and a consensus is reached in each shard based on byzantine agreement protocols \cite{elastico}. Stellar creates overlapped shards, also known as quorum slices, leveraging Federated Byzantine Agreement (FBA) to reach consensus \cite{Stellar}. Omniledger uses lottery-like RandHound and VRF-based leader election algorithms to assign validators to each shard \cite{Omniledger}. Other message-passing based protocols utilized in blockchain include PBFT \cite{pbft}, HoneyBadgerBFT \cite{miller2016honey}, Tendermint \cite{kwon2014tendermint}, Hotstuff \cite{yin2019hotstuff}, and CloudChain \cite{xu2021cloudchain}. 

\textbf{Blockchain for Internet of Things.} IoT encompasses devices that are generally connected to a wireless network.  Blockchain has been applied for various IoT applications such as access management, security enhancement, and privacy protection. 
Novo developed a new architecture, which contains six components, for access management in IoT based on blockchain \cite{novo2018blockchain}. 
Dorri \textit{et al.} optimized blockchain for IoT by introducing a distributed trust model, in which new blocks proposed by the users with high trust can be free from complete transaction validation to decrease the transaction processing overhead. 
Feng \textit{et al.} \cite{feng2020joint} proposed a radio and computational resource allocation joint optimization framework for  blockchain-enabled  mobile edge computing. 
In vehicular ad hoc networks, Malik \textit{et al.} \cite{DBLP:journals/winet/MalikNHL20} utilized blockchain to achieve secure key management. 
%
Guo \textit{et al.} \cite{guo2020network} presented a novel endogenous trusted framework for IoT, which integrates blockchain, software-defined networking, and network function virtualization. Guo \textit{et al.}  \cite{guo2019blockchain} constructed a blockchain-based authentication system to realize trusted data sharing among heterogeneous IoT platforms. In \cite{LiuTMCTokoin}, Liu \emph{et al.} developed  a tokoin (token+coin) based novel framework to provide fine-grained and accountable access procedure control leveraging blockchain for various IoT applications. Its unique significance lies in that the fine-grained access policy can be defined, modified, and revoked only by the resource owner and  the whole access procedure, not just the access action alone,  can be accountably and securely controlled.    In \cite{LiuTCTrust}, Liu \emph{et al.} proposed an important idea of extending trust from on-chain to off-chain making use of trusted hardware and blockchain to ensure that the data in digital world is consistent with the truth in physical world and that any change in physical world should be reflected instantly by the data change in digital world.

None of the works mentioned above considers the properties of wireless communications when designing their blockchain protocols. To our best knowledge, wChain \cite{xu2021wchain} presented in 2021, a blockchain protocol designed for multi-hop wireless networks, is the most relevant one.  wChain constructs a spanner-based communication backbone on a multi-hop wireless network, making use of a fault-tolerant consensus without involving the underlying physical wireless layer. Hence wChain can be complementary to BLOWN when BLOWN needs to be migrated to multi-hop networks and realize fault tolerance. 

\textbf{Consensus protocols for wireless networks.} Since consensus is the core of blockchain and our study is closely related to wireless networks, we briefly survey the studies on consensus protocols for wireless networks. 
The abstract MAC layer \cite{kuhn2009abstract} is one of the earliest models that can achieve elegant abstraction and precisely capture the fundamental guarantees of the low-level network layer performance in wireless networks. 
Newport  provided the first tight bound for distributed consensus in radio networks \cite{newport2014consensus}. With the abstract MAC layer, Newport and Robinson gave a fault-tolerant consensus algorithm that terminates within $O(N^3\log N)$, where $N$ is the unknown network size \cite{newport2018fault}.
A pioneering work on the implementation of the abstract MAC layer provides a groundbreaking scheme to adaptively tune the contention and interference in wireless channels \cite{yu2018exact}. 
Moniz \textit{et al.} \cite{moniz2012Byzantine} proposed a BFT consensus protocol allowing $k>\lfloor \frac{N}{2} \rfloor$ faulty nodes with time complexity of $O(N^2)$. They assumed an abstract physical layer in wireless ad hoc networks and directly used high-level broadcast primitives. 
Chockler \textit{et al.} \cite{chockler2005consensus} explored fault-tolerant consensus with crashed nodes. Their study reveals the relationship of collision detection and fault-tolerant consensus under a graph-based model. 
Assuming realistic message delays and a graph model, Scutari and Sergio designed a distributed consensus algorithm for wireless sensor networks \cite{scutari2008distributed}, making use of a network model that considers the MAC layer with a multipath and frequency-selective channel.
Aysal \textit{et al.} \cite{aysal2009reaching} studied the average consensus problem with probabilistic broadcasts. They explored the effect of wireless medium on the consensus process and extended the non-sum preserving algorithm to accelerate convergence. 

\textbf{Summary.}
A common drawback of proof of physical resources lies in their prohibitively large demands of physical resources such as high computational power, storage, energy, or specific hardware, of which devices in wireless networks are notoriously limited; on the other hand, proof of virtual resources might encounter centralization problems caused by the over-powerful validators or authorities. 
Even though honest nodes without high power would not harm a blockchain, it is possible for a malicious node (e.g., an honest node that has been hacked) to launch attacks freely if we do not take any measure to restrict the over-powerful nodes.
Additionally,  message-passing protocols always incur a significant amount of message exchanges leading to non-trivial communication overhead.  Existing message-passing protocols need to exchange at least $O(N)$ messages for consensus. When being applied in wireless settings, these protocols need the support of basic wireless networking functions such as CSMA/CA for contention control. If considering the message overhead of the underlying MAC protocols, their message overhead is even higher, especially in dense wireless networks. Most notably, almost all existing works mentioned above were developed for the Internet resting on the closed medium (e.g., fiber) with sufficient bandwidth where jamming is not an issue. However, existing blockchain protocols are vulnerable to jamming attacks in wireless networks. 

Motivated by these observations, in this paper we propose BLOWN, a wireless blockchain protocol that relies on a  newly-developed PoC to seamlessly integrate  wireless communications with  blockchain consensus  while guaranteeing persistence and liveness, the two critical security properties of blockchain, to counter jamming and Sybil attacks.

\section{Models and Assumptions}
\label{sec:model}

\textbf{Network Model.} In this paper, we consider a network with a set $V$ of $N$ nodes arbitrarily deployed in a communication space. Such a network could contain a group of manipulated Unmanned Aerial Vehicles (UAVs) or intelligent robots in realistic scenarios. A node is equipped with a half-duplex transceiver that can transmit or receive a message, or sense the channel, but cannot transmit and receive or transmit and sense simultaneously.  Let $d(u,v)$ be the Euclidean distance between nodes $u$ and $v$, $D_R(v)$ denote the disk centered at $v$ with a radius $R$, and $N_R(v)$ denote the set of nodes within $D_R(v)$ including $v$.  The notations of $D_R(v)$ and $N_R(v)$ are further utilized in the definition of single-hop network and the protocol analysis.
%

We assume that each node knows the identities, locations and public keys of all other nodes. We further assume that each node can generate key pairs and has access to a secure EUF-CMA digital signature scheme (details of cryptographical tools employed in this paper are presented in protocol analysis and simulation sections). Each node maintains a hash-chain of blocks, and each block contains multiple transactions. We denote frequently-used notations of transaction, block, blockchain, chain of block headers by $tx$, $B$, $BC$, and $BH$, respectively, and use super/subscript to attach more specific information. A transaction is modeled as a coin exchange process. We adopt the notion of the unspent transaction outputs (UTXOs) accounting method. A UTXO-based account stores coins in a set of UTXOs, and a UTXO defines an output of a blockchain transaction that has not been spent. This UTXO model provides a high level of security since it is convenient to authenticate all transaction sequences using UTXOs, limiting the risk of double-spending attacks. 

\begin{figure}[htbp!]
\centering
\includegraphics[width=0.45\textwidth]{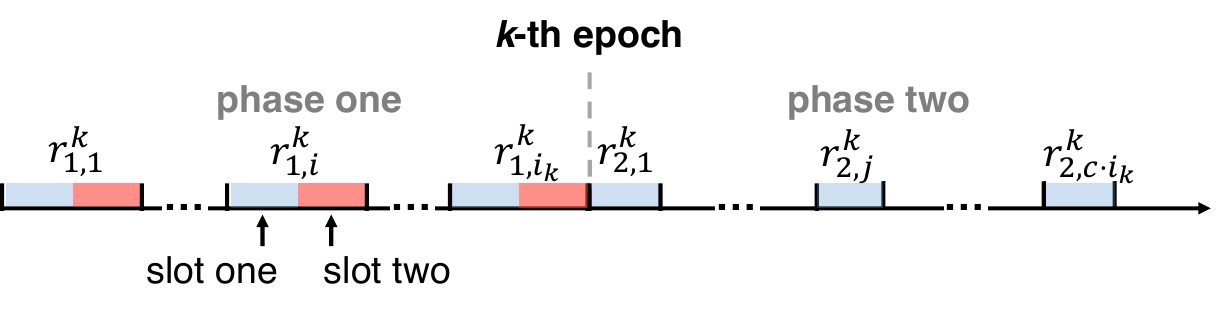}
\caption{Epoch-based execution.}
\label{Fig:epoch}
\end{figure}

\textbf{Interference Model.} We adopt the Signal-to-Interference-plus-Noise-Ratio (SINR) wireless network model, which captures the network interference in a more realistic and precise manner than a graph-based one \cite{yu2020competitive}. A standard SINR model can be formulated as follows, which states that a message sent by $u$ is correctly received by $v$ if and only if
\begin{equation}
\label{eq:sinr}
\begin{aligned}
SINR=\frac{\mathcal{S}}{\mathcal{I}+\mathcal{N}}\geq \beta
\end{aligned}
\end{equation}
where $\mathcal{S}=P\cdot d(u,v)^{-\alpha}$ is the received signal power of node $v$ from $u$, $P$ is the uniform transmit power, $\mathcal{I}=\sum_{w\in W\setminus\{u\}}P\cdot d(w,v)^{-\alpha}$ is the interference at $v$ and $W$ is the set of nodes that transmit in the current round, $\mathcal{N}$ is the ambient noise, $\alpha\in (2,6]$ is the path-loss exponent, and threshold $\beta>1$ is determined by hardware. To capture the fine-grained noise, we define $\mathcal{N}=\mathcal{ADV}(v)$, where $\mathcal{ADV}(v)$ is the composite noise generated by the environment and adversaries. Assume each node uses a common noise threshold $\theta$.   Since we consider a single-hop network where all nodes are within the transmission range of each other,  the distance between any two nodes is bounded by $R_0=(P/\beta\theta)^{1/\alpha}$. We further assume that each node can perform physical carrier sensing. If at least one neighboring node $u$ of $v$ broadcasts a message, $v$ would either receive a message or sense a busy channel. At each slot, a node $v$ may either (a) sense an idle channel (the measured signal power is lower than $\theta$); (b) receive a message (the measured signal power exceeds $\theta$ and $SINR\geq\beta$); or (c) sense a busy channel (the measured signal power exceeds $\theta$ but $SINR<\beta$).
Let $RSS=\mathcal{S}+\mathcal{I}+\mathcal{N}$ be the total received signal power at a node. Then when the node receives a message, the interference plus noise can be calculated by $\mathcal{I}+\mathcal{N}=RSS-\mathcal{S}$ with a known $\mathcal{S}$ \cite{ikuno2012accurate,zhao2011efficient,olszewski2007sinr,jeske2007method,abu1999determining}.
Besides,  nodes are not required to be fully synchronized as assuming that when a node transmits, all other nodes can receive (correctly decode) the message. We only require partial synchronization that a node may not be able to receive a message due to channel contention, but it can sense a busy channel when another node transmits the message.

\textbf{Epoch-based Execution.} 
As shown in Fig.~\ref{Fig:epoch}, the BLOWN protocol is executed in disjoint and consecutive time intervals called \textit{epochs}, and at each epoch no more than one block can be generated. Each epoch ${e_k}$ consists of two phases with each containing multiple rounds. In $e_k$, we denote $r^{k}_{1,i}$ as the $i$-th round in phase one and $r^{k}_{2,j}$ as the $j$-th round in phase two, with $r^{k}_{1,i}$ consisting of two slots and $r^{k}_{2,j}$ having only one slot. Besides, $i_k$ is the length of phase one and $c\cdot i_k$ is the length of phase two, where $c$ is a variable constant determined later. If a node just join or reconnect the blockchain network, it can synchronize blocks and history from its peers and then normally execute the BLOWN protocol.

\textbf{Adversary.} Honest nodes strictly follow the BLOWN protocol.  Besides, we assume that there exists a group of adversaries who can freely join or leave the network, create identities, or make noises to interfere with any honest node at any time. For simplicity, the group of adversaries can be regarded as a powerful adversary $\mathcal{A}$ who controls less than 50\% wealth of the entire network. $\mathcal{A}$ can launch jamming attacks by continuously sending messages without following the protocol or even colluding with other jammers. To leave a chance for an honest node to communicate, $\mathcal{A}$ is ($(1-\epsilon),T$)-bounded at any time interval of length $T$ rounds, where $T \in \mathbb{N}$ (the set of natural numbers) and $0<\epsilon\leq 1$, indicating that the super adversary can jam nonuniformly at most $(1-\epsilon)T$ rounds within $T$. Each node $v$ maintains a variable $T_v$, which is the estimate of $T$ by $v$.  

In this paper, we say that event $E$ occurs with high probability (w.h.p.) if for any $c\geq1$, $E$ occurs with probability at least $1-1/N^c$, and with moderate probability (w.m.p.) if for any $c\geq1$, $E$ occurs with probability at least $1-1/\log^cN$. A summary of all important notations (including the ones from the BLOWN protocol and the protocol analysis) and their semantic meanings is provided in Table~\ref{table:summary:notation}.

\begin{table}
\caption{Summary of Notations}
\label{table:summary:notation}
\centering
\begin{tabular}{c|c}
  \hline
  \textbf{Symbol} & \textbf{Description} \\
  \hline
  $B^{k}_v$ & block generated by node $v$ in the $k$-th epoch\\
  $BC_v^k$ & blockchain locally stored at node $v$\\
  $BH_v^k$ & block headers read from $BC_v^k$ \\
  $P_1(P_2)$ & phase one (phase two) \\
  $V$ & set of all nodes \\
  $N$ & network size \\
  $c$ & constant used to determine the length of phase two\\
  $c_v$ & counter variable used to record round information\\
  $i_k$ & length of $P_1$ of the $k$-th epoch\\
  $tx$ & a transaction \\
  $txp_v$ & temporary transaction stack of node $v$\\
  $p_v$ & $v$'s probability of sending a message\\
  $p_V$ & aggregated probability of all nodes\\
  $\hat{p}$ & an upper bound of $p_v$\\
  $r^{k}_{1,i}(r^{k}_{2,j})$ & $i$-th ($j$-th) round of $P_1(P_2)$ in the $k$-th epoch \\
  $l_v$ & $l_v\in \{0,1,\cdots, w_v\}$ as the leader counter of $v$\\
  $l_v^0$ & initial value of $l_v$ generated by \emph{Sortition}()\\
  $T$ & time window of the adversary\\
  $T_v$ & estimate of $T$ by node $v$\\
  $w_v$ & deposit of node $v$\\
  $\pi_v$ & proof created by \emph{Sortition}()\\
  $\epsilon$ & proportion of non-jammed rounds\\
  $\tau$ & the parameter to determine the hardness of Sortition\\
  \hline
\end{tabular}
\end{table}

\section{The BLOWN protocol}
\label{sec:blown}
In this section, we present the two-phase BLOWN protocol. We first summarize the BLOWN protocol by providing an overview of BLOWN and its construction primitives, and then detail the protocol itself.

\subsection{Overview and Utilities of BLOWN}
In this subsection, we present an overview on BLOWN, and describe the construction primitives/utilities to more precisely and concisely illustrate the BLOWN protocol.

\subsubsection{An Overview on BLOWN}


\begin{figure}[htbp!]
\centering
\includegraphics[width=0.5\textwidth]{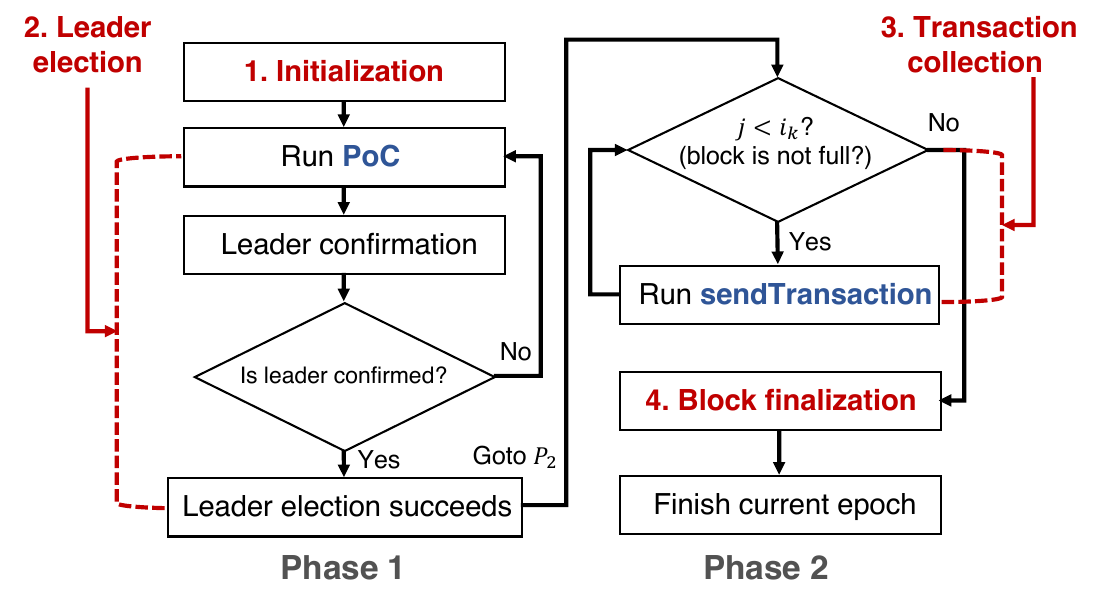}
\caption{BLOWN protocol overview.}
\label{Fig:overview}
\end{figure}

\begin{algorithm}[!ht]
\caption{Utilities for node $v$}\label{alg:utility}
  \begin{algorithmic}[1]
  \Function{\emph{Sortition}} {$sk_v$, seed, role, $\tau$, $w_v$, $W$}
      \State $h_v$, $\pi_v$ = VRF($sk_v$, seed$||$role)
      \State $p=\tau$/$W$, $l_v=$\emph{LeaderCounter}(role, $w_v$, $h_v$, $p$)
      \State \textbf{return} $h_v, \pi_v, l_v$
  \EndFunction

  \Function{\emph{VerifySortition}} {$pk_v$, $h_v$, $\pi_v$, seed, role, $\tau$, $w_v$, $W$, $l_v$} 
      \If {VerifyVRF($pk_v$, $h_v$, $\pi_v$, seed$||$role) $= 0$} 
          \State \textbf{return} 0
      \EndIf
      \State $p=\tau$/$W$, $\hat{l_v}=$\emph{LeaderCounter}(role, $w_v$, $h_v$, $p$)
      \If {$\hat{l_v}\neq l_v$}
          \State \textbf{return} 0
      \EndIf
      \State \textbf{return} 1
  \EndFunction

  \Function{\emph{LeaderCounter}} {role, $w_v$, $h_v$, $p$}
      \State $l_v=0$
      \If {role is \texttt{FOLLOWER}}
          \State \textbf{return} $l_v$
      \EndIf
      \If {$\frac{h_v}{2^l} \in \left[0, \sum_{k=0}^{l_v} B(k;w_v,p)\right]$}
          \State \textbf{return} $l_v$
      \EndIf
      \State  $l_v=1$
      \While {$\frac{h_v}{2^l} \notin \left( \sum_{k=0}^{l_v-1} B(k;w_v,p), \sum_{k=0}^{l_v} B(k;w_v,p)\right]$}
          \State $l_v=l_v+1$
      \EndWhile
      \State \textbf{return} $l_v$
  \EndFunction

  \Function{$\emph{MSG}$} {$r^{k}_{1,i}$, $l_v$}
      \State $m.r^{k}_{1,i}\leftarrow r^{k}_{1,i}$ 
      \State $m.l_v\leftarrow l_v$ 
      \State \textbf{return} $m$
  \EndFunction

  \Function{$\emph{MSGT}$} {$tx$, $r^{k}_{2,j}$, $l_v$}
      \State $m_T.tx\leftarrow tx$
      \State $m_T.r^{k}_{2,j}\leftarrow r^{k}_{2,j}$ 
      \State $m_T.l_v\leftarrow l_v$ 
      \State \textbf{return} $m_T$
  \EndFunction

  \Function{$\emph{MSGB}$} {$BC_v^{k}$, $B_v^{k}$, $r^{k}_{2,j}$, $l_v$, role, $w_v$, $h_v$, $\pi_v$, $l_v'$}
      \State $m_B.BH_v^{k}\leftarrow$ read $BH_v^{k}$ from $BC_v^{k}$
      \State $m_B.B_v^{k}\leftarrow B_v^{k}$
      \State $m_B.r^{k}_{2,j}\leftarrow r^{k}_{2,j}$ 
      \State $m_B.l_v\leftarrow l_v$
      \State $m_B.sort\leftarrow \{\text{role}, w_v, h_v, \pi_v, l_v'\}$
      \State \textbf{return} $m_B$
  \EndFunction

  \Function{\emph{Pack}} {$txp_v$}\Comment{Pack $txp$ to form a block}
      \State \textbf{return} $B_v^{k}$
  \EndFunction

  \Function{\emph{Append}} {$BC_v^{k-1}$, $B_u^{k}$}\Comment{append the new block} 
      \State \textbf{return} $BC_v^{k}\leftarrow BC_v^{k-1}+B_u^{k}$
  \EndFunction

  \end{algorithmic}
\end{algorithm}

The BLOWN protocol proceeds in epochs, with each constructing no more than one block. Specifically, our protocol has two phases within an epoch, denoted by $P_1$ and $P_2$ as shown in Fig.~\ref{Fig:overview}. $P_1$ is responsible for initialization and leader election while $P_2$ is for transaction collection and block finalization. In our design, nodes contend by broadcasting messages on a wireless channel. In response, we establish a robust jamming-resistant channel by introducing an adaptive transmission mechanism, confronting background noise and jamming simultaneously. Such a channel is realized by dynamically adjusting the transmission probability $p_v$ of each node $v$ according to its sensed contention in the network.  Concretely, we first adopt the Sortition algorithm to assign $v$ a weight $l_v$ based on its account balance. Sortition ensures that splitting coins to generate massive identities cannot break our protocol. After initialization, the protocol starts the process of leader election. We utilize the nature of contention in a wireless network to design our proof-of-channel consensus mechanism (PoC). To achieve usability and efficiency, PoC allows nodes to compete on the channel right-of-use to obtain opportunities of proposing blocks rather than rely on extra physical resources or introduce communication overhead. More specifically, upon receiving a message, $l_v$ is decremented. The sole survivor with non-zero $l_v$ at the end of $P_1$ is appointed as the leader. This essentially integrates leader election and channel contention into a single process, namely the phase one of BLOWN. In $P_2$, the leader is responsible for collecting and verifying transactions, assembling them into a new block, and then broadcasting the block to the whole network. If the new block is valid, it is admitted by all honest nodes.

\subsubsection{Utilities}
\label{sub:sec:utilities}
Algorithm~\ref{alg:utility} lists the following frequently used functions for any node $v$ in BLOWN: \emph{Sortition}(), \emph{VerifySortition}(), \emph{LeaderCounter}(), \emph{MSG}(), \emph{MSGT}(), \emph{MSGB}(), \emph{Pack}(), and \emph{Append}(). It also presents the following data structures employed by the above functions: transaction $tx$, transaction stack $txp_v$, block $B^{k}_v$, block header $BH_v^{k}$, blockchain $BC_v^{k}$, basic message $m$, transaction message $m_T$, and block message $m_B$.

Concretely, we modify the crytographic Sortition algorithm proposed by Algorand\cite{gilad2017algorand} to make it suitable for our BLOWN protocol. The Sortition algorithm is based on a verifiable random function (VRF), which takes as inputs a private key $sk_v$, a random seed and a role, and outputs a hash $h_v$ as well as its corresponding proof $\pi_v$. There are two types of roles: a \texttt{FOLLOWER} who can only be a follower during an epoch and a \texttt{\texttt{LEADER}} who is a potential leader. Besides, $W$ is the accumulated number of coins of all users in the network, $w_v$ is the deposit of node $v$, $l_v \in \{0,1,\cdots,w_v\}$ is the leader counter of the node $v$, and $p=\frac{\tau}{W}$ is the probability based on which each coin is used to increment the counter value where $\tau$ determines the hardness. The probability of $l_v=k$ follows the binomial distribution $B(k;w_v,p)=\binom{w_v}{k}p^k(1-p)^{w_v-k}$ with $\sum_0^{w_v} B(k;w_v,p)=1$. To determine $l_v$, the \emph{LeaderCounter}(role, $w_v$, $h_v$, $p$) divides $[0,1]$ into consecutive intervals as $I(l_v) = \left[0, \sum_{k=0}^{l_v} B(k;w_v,p)\right]$ for ${l_v=0}$ and $I(l_v)=\left( \sum_{k=0}^{l_v-1} B(k;w_v,p), \sum_{k=0}^{l_v} B(k;w_v,p)\right]$ for $I(l_v) \in \{1,\cdots,w_v\}$. If $v$'s role is \texttt{FOLLOWER}, $I(l_v)=0$; otherwise, if the normalized hash $\frac{h_v}{2^l}$ ($l$ is the hash length) falls in the interval $I(l_v)$, $l_v$ is returned as the value of the leader counter. The function \emph{VerifySortition}() intends to check if $h_v, \pi_v, l_v$ are valid by calling \emph{VerifyVRF}() and recomputing \emph{LeaderCounter}().

Three functions, namely \emph{MSG}(), \emph{MSGT}(), and \emph{MSGB}(), generate messages that can respectively be used for leader election, transaction collection, and block finalization. Specifically, \emph{MSG}() creates a basic message $m$ for leader election in $P_1$, \emph{MSGT}() produces a message $m_T$ embodying a transaction, which is sent during the transaction collection process in $P_2$, and \emph{MSGB}() outputs a message $m_B$ which contains a $B_v^{k}$ generated by the leader $v$, a $BH_v^{k}$ read from $BC_v^{k}$, the current value $l_v$ of the leader counter, and a string $\{\mbox{role}, w_v, h_v, \pi_v, l_v'\}$ used to verify Sortition where $l_v'$ is the original value of the leader counter. To reduce communication cost, we send $BH_v^{k}$ embodied in $m_B$ for a simplified verification. Finally, \emph{Pack}($txp_v$) is adopted to validate and pack transactions to form a new block, and \emph{Append}($BC_v^{k-1}$, $B_u^{k}$) appends the new block $B_u^{k}$ to the local blockchain $BC_v^{k-1}$.

\subsection{The BLOWN Protocol Specifications}
\label{protocol}
In a nutshell, BLOWN is a two-phase protocol. As shown respectively in Algorithm~\ref{alg:round1} and Algorihtm~\ref{alg:round2}, phase $P_1$ is employed for initialization and leader election while phase $P_2$ is for transaction collection and block finalization.

\subsubsection{Phase $P_1$}

\begin{algorithm}[!htbp]
  \caption{PoC subroutine}\label{alg:phase1:sub}
  \begin{algorithmic}[1]
        \If {$v$ decides to send a message based on $p_v$}
            \State $m\leftarrow$ $\emph{MSG}$($r^{k}_{1,i}$, $l_v$), $v$ broadcasts $(m,\sigma)$
        \Else
            \If {channel is idle} 
                \State $p_v=\min\{(1+\gamma)p_v, \hat{p}\}$
                \State $T_v = \max\{1, T_v-1\}$
            \Else
                \If {$v$ receives a message $(m,\sigma)$}
                    \State $p_v=(1+\gamma)^{-1}p_v$
                    \State $l_v=l_v-1$ 
                \EndIf
            \EndIf
        \EndIf
        \State $c_v = c_v + 1$
            \If {$c_v \geq T_v$}
                \State $c_v = 1$
                \If {there is no idle rounds in the past $T_v$ rounds}
                    \State $p_v = (1+\gamma)^{-1}p_v$,
                    \State $T_v = T_v + 2$
                \EndIf
            \EndIf
  \end{algorithmic}
\end{algorithm}

\begin{algorithm}[!htbp]
  \caption{BLOWN $P_1$ protocol}\label{alg:round1}
  \begin{algorithmic}[1]
    \State $\triangleright$ \textcolor{BrickRed}{Initialization}
    \State $h_v,\pi_v,l_v^0=$\emph{Sortition}($sk_v$, seed$||$role, $\tau$, $w_v$, $W$)
    \State $p_v = \hat{p}$, $c_v=0$, $T_v=1$, $i=1, l_v=l_v^0$
    \State $\triangleright$ \textcolor{BrickRed}{Leader election}
    \While {TRUE}
    \If {$l_v>0$} \Comment{\textcolor{blue}{As a potential leader}}
        \State $\triangleright$ \textcolor{blue}{slot one of $r^{k}_{1,i}$}
        \State run \textbf{PoC}

    \State $\triangleright$ \textcolor{blue}{slot two of $r^{k}_{1,i}$}
    \If {$v$ broadcasts a message in slot one}
        \State $v$ listens on the channel
        \If {$v$ senses an idle channel}
            \State Goto $P_2$ with $i_k=i$ \Comment{run $P_2$ as a leader}
        \EndIf
    \Else 
        \State $m\leftarrow$ $\emph{MSG}$($r^{k}_{1,i}$, $l_v$), and $v$ broadcasts $(m,\sigma)$
    \EndIf

   \Else \Comment{\textcolor{blue}{As a follower}}
        \State $\triangleright$ \textcolor{blue}{slot one of $r^{k}_{1,i}$}
        \State $v$ listens on the channel to receive a message

        \State $\triangleright$ \textcolor{blue}{slot two of $r^{k}_{1,i}$}
        \If {in slot one $v$ receives $(m,\sigma)$ from $u$ and has $\mathcal{I+N}<\theta$}
            \If {$v$ senses an idle channel}
                \State $v$ recognize $u$ as the leader
                \State Goto $P_2$ with $i_k=i$ \Comment{run $P_2$ as a follower}
            \EndIf
        \Else
            \State $m\leftarrow$ $\emph{MSG}$($r^{k}_{1,i}$, $l_v$), and $v$ broadcasts $(m,\sigma)$
        \EndIf
    \EndIf

    \State $i=i+1$
  \EndWhile
  \end{algorithmic}
\end{algorithm}

Let's examine the details of the BLOWN $P_1$ protocol. Lines 2-3 of Algorithm~\ref{alg:round1} constitute the initialization process. First, \emph{Sortition}() takes as inputs $sk_v$, seed$||$role, $\tau$, $w_v$ and $W$ (see Line 2), and outputs $h_v$, $\pi_v$ and $l_v^0$, where $h_v$ and $\pi_v$ are respectively a hash and its corresponding proof, and $l_v^0 \in \{0,1,\cdots,w_v\}$ stands for the initial leader counter.  All the inputs of the Sortition algorithm are illustrated in Section.~\ref{sub:sec:utilities}.  Note that $l_v>0$ indicates that $v$ remains to be a potential leader while $l_v=0$ indicates that $v$ is a follower. Let $\hat{p}$ be the maximum transmission probability, which can be initialized to any small number in $(0,1)$. Since the absence of followers might lead to a bad case in which all nodes are potential leaders and simultaneously broadcast messages in slot one\footnote{Such a bad case only occurs with a small probability, which is less than $\hat{p}^n$}, we prevent this from occurring by ensuring that there always exist at least one follower after initialization. A simple approach to achieving this goal is to artificially and randomly add followers (with a \texttt{FOLLOWER} rule) to the network.
Second, we set $p_v = \hat{p}$, $c_v=0$, $T_v=1$, $i=1, l_v=l_v^0$ (Line 3 ), where $p_v$ is the probability on which node $v$ decides to send a message, and is upper-bounded by $\hat{p}$, $c_v=1$ is a counter variable used to record round information, $T_v$ is the estimate of the time window of the adversary by node $v$, $i$ is the round counter used in $P_1$, and $l_v$ is the leader counter variable initialized to $l_v^0$. 
After initialization, $P_1$ proceeds round by round with each containing two slots, and a node $v$'s activity at each slot depends on its role. 

Before proceeding any further, we need to explain the PoC subroutine described in Algorithm~\ref{alg:phase1:sub} to adjust $l_v$, $p_v$ and $T_v$, the leader counter, transmission probability and adversary's time window estimate, according to the sensed channel condition at the first slot of each round in $P_1$. Specifically, $v$ with $l_v>0$ (a potential leader) performs the following actions: it either broadcasts a message $(m,\sigma)$ with probability $p_v$ (Lines 1-2), where $\sigma$ is the signature of $m$, or senses the channel with probability $1-p_v$ (Lines 3-10). One can see that $v$ adapts its $p_v$ in a multiplicative increase or decrease manner by a factor of $(1+\gamma)$, where $\gamma=O(1/(\log T+\log \log N))$ is a small number that is loosely determined by $N$ and $T$ (see the proof of Theorem~\ref{thm:jamming:resistant}).
More specifically, $p_v$ is multiplicatively increased (Line 5) when the channel is sensed idle or decreased (Line 9) when a message is received\footnote{Receving a message indicates the message has a valid signature, and we do not explicitly present the signature verification process for conciseness.}. Such a mechanism ensures that honest nodes can cooperatively adjust their transmission probabilities to help reduce contention on the channel. Meanwhile, we decrease $T_v$ by 1 if the channel is idle (Line 6) as the estimate of adversary's time window seems to be too large when the channel is idle, and decrease $l_v$ by 1 if a neighbor of $v$ successfully broadcasts a message (Line 10) as the neighbor seems to have a better chance of being the leader. On the other hand, if the number of rounds in $P_1$ is no less than $T_v$ (Line 12), the estimate of the adversary's time window, we further check whether or not there is an idle round in the past $T_v$ rounds (Line 14), and if not, $p_v$ is decreased (Line 15) and $T_v$ is increased (Line 16) to further adjust $p_v$ and $T_v$. 
One can see that a successful broadcast causes the decrements of the $l_v$ values of the receivers. When $l_v=0$, $v$ becomes a follower who can only sense the channel in slot one of the next round. We establish a robust jamming-resistant channel by introducing an adaptive transmission mechanism, confronting channel contention and jamming attacks simultaneously.  Such a channel is realized by dynamically adjusting the transmission probability $p_v$ of each node $v$ according to its sensed contention in the network.   This mechanism can better address jamming attacks 
compared to the Carrier-sense multiple access with collision avoidance (CSMA/CA) technique.

\begin{figure}[htbp!]
\centering
\includegraphics[width=0.5\textwidth]{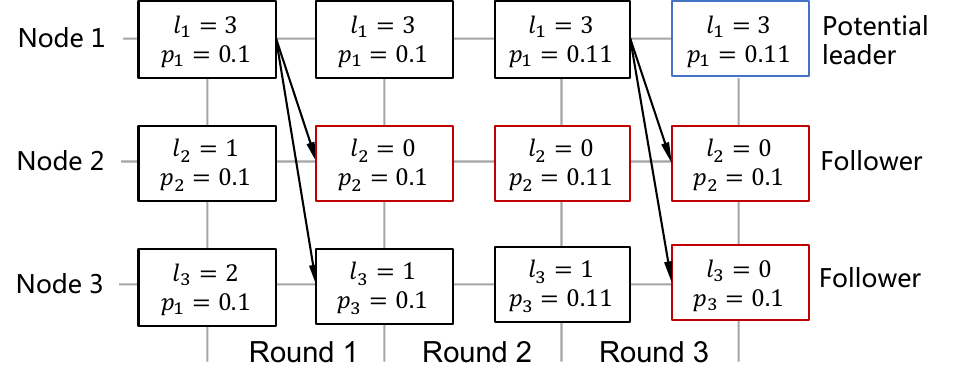}
\caption{A toy example of running PoC with three nodes.}
\label{Fig:poc}
\end{figure}

To better illustrate the PoC subroutine, we provide a toy example with three nodes. Assume that $\gamma=0.1$ and initially $p_1=p_2=p_3=\hat{p}=0.1$ and $l_1=3, l_2=1,l_3=2$. In the first round, node 1 successfully transmits a message to node 2 and 3, thus $l_2$ and $l_3$ each decreases by 1 and $p_2=p_3=\min\{(1+\gamma)p_v, \hat{p}\}=0.1$. Since $l_2=0$,  node 2 becomes a follower. In the second round, all nodes choose not to transmit so they all sense an idle channel and increment $p_1, p_2, p_3$ to $0.1\times(1+\gamma)=0.11$. In the third round, node 1 successfully transmits to node 2 and 3. Then node 3 becomes a follower and only node 1 survives as the unique potential leader, at which time $p_2=p_3=0.11\times(1+\gamma)^{-1}=0.1$.

Now we are back to continue explaining Phase $P_1$ of the BLOWN protocol, which contains multiple rounds. At slot one of each round,  if $v$ is a potential leader, which means $l_v>0$, $v$ runs the PoC subroutine described in Algorithm~\ref{alg:phase1:sub} (Line 8); otherwise, $v$ listens on the channel for message reception (Line 18).  At slot two of each round, $v$ behaves according to its actions in slot one. If $v$ as a potential leader broadcasts a message in slot one and senses an idle channel in slot two, it can set itself as a leader and goto $P_2$ (Lines 10-13); otherwise it broadcasts a message in slot two. A follower $v$ recognizes $u$ as the leader only when $v$ believes $u$ is the only transmitter in slot one and senses an idle channel in slot two (Lines 20-23); otherwise $v$ transmits in slot two (Line 25). 
In Theorem \ref{thm:leader:success}, we prove that slot two is capable of letting the leader and the followers mutually recognize each other. 

At the end of $P_1$, there should be only one survivor with $l_v>0$, who then becomes the leader. Note that $i_k$ denotes the length of $P_1$, which is used to determine the length of $P_2$. We will prove in Theorem~\ref{thm:leader:success} of Section~\ref{sec:security:analysis} that Algorithm~\ref{alg:round1} ensures a successful leader election.

\begin{algorithm}[!htbp]
  \caption{sendTransaction subroutine}\label{alg:phase2:sub}
  \begin{algorithmic}[1]
        \If {$v$ decides to send a message based on $p_v$}
                \State $m_T$ $\leftarrow$$\emph{MSGT}$($tx$, $r^{k}_{2,j}$, $l_v$), and $v$ broadcasts $(m_T,\sigma_T)$
            \Else
                \If {channel is idle} 
                    \State $p_v=\min\{(1+\gamma)p_v, \hat{p}\}$
                    \State $T_v =  \max\{1, T_v-1\}$
                \Else
                    \If {receives a message $(m_T,\sigma_T)$}
                        \State $p_v=(1+\gamma)^{-1}p_v$
                    \EndIf
                \EndIf
        \EndIf

            \State $c_v = c_v + 1$
                \If {$c_v \geq T_v$}
                    \State $c_v = 1$
                    \If {there is no idle round in the past $T_v$ rounds}
                        \State $p_v = (1+\gamma)^{-1}p_v$,
                        \State $T_v = T_v + 2$
                    \EndIf
                \EndIf
  \end{algorithmic}
\end{algorithm}

\subsubsection{Phase $P_2$}

\begin{algorithm}[!htbp]
  \caption{BLOWN $P_2$ protocol}\label{alg:round2}
  \begin{algorithmic}[1]
    \State $\triangleright$ \textcolor{BrickRed}{Transaction collection}
    \While {$j<c\cdot i_k$}
        \If {$l_v>0$} \Comment{\textcolor{blue}{As a leader}}
            \State $v$ listens on the channel to receive a $(m_T,\sigma_T)$
            \If {receives $m_T$.tx$\neq\perp$}
                \State $txp_v$[$j$] = $m_T.tx$
            \EndIf
        \Else \Comment{\textcolor{blue}{As a follower}}
            \State run \textbf{sendTransaction}
        \EndIf
        \State $j=j+1$
\EndWhile

    \State $\triangleright$ \textcolor{BrickRed}{Block finalization}
    \If {$j=c\cdot i_k$}
    \If {$l_v>0$} \Comment{\textcolor{blue}{As a leader}}
        \State $B_v^{k}\leftarrow\emph{Pack}$($txp_v$)
        \State $BC_v^{k}\leftarrow\emph{Append}$($\textsc{BC}_v^{k-1}$, $B_v^{k}$)
        \State $m_B\leftarrow\emph{MSGB}$($BC_v^{k}$, $B_v^{k}$, $r^{k}_{2,j}$, $l_v$, role, $w_v$, $h_v$, $\pi_v$, $l_v^0$), and broadcasts $(m_B,\sigma_B)$
    \Else \Comment{\textcolor{blue}{As a follower}}
        \If {receives $(m_B,\sigma_B)$ \&\& \emph{VerifySortition} ($pk_v$, seed, $\tau$, $W$, $m_B.sort$) $=1$}
            \State \emph{Append}($\textsc{BC}_v^{k-1}$, $m_B.B_u^{k}$)
        \EndIf
    \EndIf
    \EndIf
  \end{algorithmic}
\end{algorithm}

Phase $P_2$ of BLOWN performs transaction collection and block finalization, as shown in Algorithm~\ref{alg:round2}. It proceeds by a fixed amount of $c\cdot i_k$ rounds where each round contains only one slot, and $c$ is a constant to directly determine the length of $P_2$ and indirectly the maximum block size, which can be adjusted according to specific implementations. We refer to $j$ as the round counter in $P_2$. If $j<c\cdot i_k$, a leader selected in $P_1$ should listen to the channel to receive signed transaction messages, which are recorded in the stack $txp_v$, while other nodes continuously broadcast signed transaction messages (Lines 2-8). After $c\cdot i_k$ rounds, the leader serializes all transactions to form a new block denoted by $B_v^{k}\leftarrow$\emph{Pack}($txp_v$), and broadcasts the $(m_B, \sigma_B)$ (Lines 12-15). Once receiving a $(m_B,\sigma_B)$ from $u$, a node $v$ should append the new block to its local blockchain only if $\sigma_B$ is valid and \emph{VerifySortition} ($pk_v$, seed, $\tau$, $W$, $m_B.sort$) $=1$ (Lines 17-18). Note that the sendTransaction subroutine presented in Algorithm~\ref{alg:phase2:sub} is employed by $P_2$ to broadcast transactions and the parameters $p_v$, $c_v$, $T_v$ are utilized to ensure jamming-resistant communications as they function in the PoC subroutine shown in Algorithm~\ref{alg:phase1:sub}.

\section{Protocol Analysis}
\label{sec:security:analysis}
\begin{figure*}[!htbp]
  \centering
  \setlength{\fboxsep}{0.3cm}
  \fbox{
    \begin{minipage}{1.95\columnwidth}
      \begin{center}
        \textbf{Protocol $\mathcal{\pi}_{\rm B}[\mathcal{F}_{\rm SIG}, \mathcal{F}_{\rm SORT}]$}
      \end{center}
      \vspace{0.1cm}

      $\mathcal{\pi}_{\rm B}$ is a protocol run by all nodes interacting with the ideal functionalities $\mathcal{F}_{\rm SIG}$ and $\mathcal{F}_{\rm SORT}$.

      \textbf{Initialization:} Send ($sk_v$, $seed$, $role$, $\tau$, $w_v$, $W$) to $\mathcal{F}_{\rm SORT}$, which returns ($h_v,\pi_v,l_v^0$). Next, initialize the remaining local parameters as $p_v = \hat{p}$, $c_v=0$, $T_v=1$, $i=1, l_v=l_v^0$.

      \textbf{Leader election:} For each round $r^{k}_{1,i}$ of $P_1$ during the $k$-th epoch, perform the following (1) or (2) according to the value of $l_v$:

      \begin{enumerate}
      \renewcommand{\labelenumi}{(\theenumi)}
          \item If $l_v>0$, run PoC in slot one. If broadcasting a message in slot one, listen on the channel in slot two and if the channel is idle, goto $P_2$ with $i_k=i$ at the end of slot two; otherwise, send $m\leftarrow \emph{MSG}$($r^{k}_{1,i}$, $l_v$) to $\mathcal{F}_{\rm SIG}$, which returns a signed message $(m, \sigma)$, i.e., $(m, \sigma)$ is obtained by querying $\mathcal{F}_{\rm SIG}$, then broadcast $(m, \sigma)$ in slot two. 
          \item If $l_v=0$, listen on the channel in slot one. If receiving a valid $(m, \sigma)$ from $u$ with $\mathcal{I+N}<\theta$ in slot one, and sensing an idle channel in slot two, recognize $u$ as the leader and goto $P_2$ with $i_k=i$ at the end of slot two; otherwise, generate $m\leftarrow$ $\emph{MSG}$($r^{k}_{1,i}$, $l_v$), send $m$ to $\mathcal{F}_{\rm SIG}$, which returns $(m, \sigma)$, then broadcast $(m, \sigma)$ in slot two. Note that a valid $m$ holds when $\mathcal{F}_{\rm SIG}$ returns  1 upon being queried with ($m, \sigma$).  
      \end{enumerate}

      \textbf{Transaction collection:} At each round $r^{k}_{2,j}$, if $l_v>0$, listen on the channel for a possible signed transaction message $m_T$, add the transaction to the local stack as $txp_v$[$j$] = $m_T.tx$ if receiving $(m_T, \sigma_T)$ and $\mathcal{F}_{\rm SIG}$ returns 1 when being queried with $(m_T, \sigma_T)$. 
      If $l_v=0$, send $m_T$ $\leftarrow$$\emph{MSGT}$($tx$, $r^{k}_{2,j}$, $l_v$) to $\mathcal{F}_{\rm SIG}$, which returns a signed message $(m_T, \sigma_T)$, then broadcast $(m_T, \sigma_T)$. 

      \textbf{Block finalization:} During the round $r^{k}_{2,c\cdot i_k}$, if $l_v>0$, execute $B_v^{k}\leftarrow\emph{Pack}$($txp_v$) and $BC_v^{k}\leftarrow\emph{Append}$($\textsc{BC}_v^{k-1}$, $B_v^{k}$); then generate $m_B\leftarrow\emph{MSGB}$($BC_v^{k}$, $B_v^{k}$, $r^{k}_{2,j}$, $l_v$, role, $w_v$, $h_v$, $\pi_v$, $l_v^0$) and send it to $\mathcal{F}_{\rm SIG}$, which returns $(m_B, \sigma_B)$. 
      If $l_v=0$, listen on the channel for a possible block message $(m_B, \sigma_B)$; if receiving a valid $(m_B, \sigma_B)$, which means $\mathcal{F}_{\rm SIG}$ returns 1 upon being queried with ($m_B, \sigma_B$), and $\mathcal{F}_{\rm SORT}$ returning 1 upon being queried with ($pk_v$, seed, $\tau$, $W$, $m_B.sort$), execute \emph{Append}($\textsc{BC}_v^{k-1}$, $m_B.B_u^{k}$).

    \end{minipage}
    }
  \caption{The protocol (hybrid experiment) $\mathcal{\pi}_{\rm B}[\mathcal{F}_{\rm SIG}, \mathcal{F}_{\rm SORT}]$.}
  \label{fig:uc:blowm}
\end{figure*}

Proving security properties of a complex protocol such as BLOWN is very challenging. Thus we leverage the universally composable (UC) framework proposed by Canetti \textit{et al.} \cite{959888}. The UC framework captures the security of a protocol via emulating an idealized protocol $\mathcal{F}$ (often referred to as an ideal functionality), which satisfies strong security properties. Then a real protocol $\pi$ specifying concrete implementations is said to be secure if it is indistinguishable from $\mathcal{F}$. The main feature of the UC framework is the universal composability that allows one to perform analysis on a complex protocol, whose security properties can be derived from the security of its components. 

\begin{figure*}[!htbp]
  \centering
  \setlength{\fboxsep}{0.3cm}
  \fbox{
    \begin{minipage}{1.95\columnwidth}
      \begin{center}
        \textbf{Protocol $\pi_{\rm SORT}[\mathcal{F}_{\rm VRF}]$} 
      \end{center}
      
      \vspace{0.3cm}
    
      \textbf{LeaderCounter:} When activated with input (role, $w_v$, $h_v$, $p$), first initialize $l_v=0$. If role is \texttt{FOLLOWER}, output $l_v=0$ and exit. If role is \texttt{LEADER}, compute $\frac{h_v}{2^l}$; if $\frac{h_v}{2^l}$ falls in $\left[0, \sum_{k=0}^{l_v} B(k;w_v,p)\right]$, output $l_v=0$ and exit; otherwise, increase $l_v$ until it satisfies that $\frac{h_v}{2^l} \in \left( \sum_{k=0}^{l_v-1} B(k;w_v,p), \sum_{k=0}^{l_v} B(k;w_v,p)\right]$, then send $l_v$ to $v$ and output $l_v$.

      \textbf{Sortition:} When activated with input ($sk_v$, seed, role, $\tau$, $w_v$, $W$), first feed ($sk_v$, seed$||$role) to $\mathcal{F}_{\rm VRF}$, which returns ($h_v$, $\pi_v$); then compute $p=\tau$/$W$ and input (role, $w_v$, $h_v$, $p$) to LeaderCounter, which returns $l_v$; finally, output ($h_v, \pi_v, l_v$).
      
      \textbf{VerifySortition:} When activated with input ($pk_v$, $h_v$, $\pi_v$, seed, role, $\tau$, $w_v$, $W$, $l_v$), first feed ($pk_v$, $h_v$, $\pi_v$, seed$||$role) to $\mathcal{F}_{\rm VRF}$, which returns ($pk_v$, $h_v$, $\pi_v$, seed$||$role, $f$). If $f=0$, output \texttt{FALSE}, which means that verification fails; if $f=1$, compute $p=\tau$/$W$ and feed ($h_v$, $\pi_v$) to LeaderCounter to obtain $\hat{l_v}$. Following that, if $\hat{l_v}\neq l_v$, output \texttt{FALSE}; otherwise output \texttt{TRUE}, which means that verification succeeds. 
      
    \end{minipage}
    }
  \caption{The protocol (hybrid experiment) $\pi_{\rm SORT}[\mathcal{F}_{\rm VRF}]$.}
  \label{fig:uc:sort}
\end{figure*}

\begin{figure*}[!htbp]
  \centering
  \setlength{\fboxsep}{0.3cm}
  \fbox{
    \begin{minipage}{1.95\columnwidth}
      \begin{center}
        \textbf{Functionality $\mathcal{F}_{\rm SORT}$} 
      \end{center}
      
      \vspace{0.3cm}
      \textbf{LeaderCounter:} Upon receiving (role, $w_v$, $h_v$, $p$) from some node $v$, verify if role is \texttt{FOLLOWER}. If so, send $l_v=0$ to $v$; otherwise, compute $\frac{h_v}{2^l}$. Next if $\frac{h_v}{2^l}$ falls in $\left[0, \sum_{k=0}^{l_v} B(k;w_v,p)\right]$, sends $l_v=0$ to $v$; otherwise increase $l_v$ until it satisfies that $\frac{h_v}{2^l} \in \left( \sum_{k=0}^{l_v-1} B(k;w_v,p), \sum_{k=0}^{l_v} B(k;w_v,p)\right]$, then send $l_v$ to $v$.

      \textbf{Sortition:} Upon receiving ($sk_v$, seed, role, $\tau$, $w_v$, $W$) from some node $v$, send ($sk_v$, seed$||$role) to the adversary, who returns ($h_v$, $\pi_v$). 

      \begin{enumerate}
      \renewcommand{\labelenumi}{(\theenumi)}
      \item If there is no entry ($sk_v$, seed$||$role, $h_v$, $\pi_v$) recorded, record ($sk_v$, seed$||$role, $h_v$, $\pi_v$); if there is an existing entry ($sk_v$, seed$||$role, $h_v'$, $\pi_v'$) that satisfies $h_v'=h_v$ and $\pi_v'=\pi_v$, do nothing. Next compute $p=\tau$/$W$ and send (role, $w_v$, $h_v$, $p$) to LeaderCounter, which returns $l_v$. Finally, send ($h_v, \pi_v, l_v$) to $v$. 
      \item If there is an entry ($sk_v$, seed$||$role, $h_v'$, $\pi_v'$) recorded but $h_v'\neq h_v$ or $\pi_v'\neq \pi_v$, send an error message to $v$. 
      \end{enumerate}
      
      \textbf{VerifySortition:} Upon receiving ($pk_v$, $h_v$, $\pi_v$, seed, role, $\tau$, $w_v$, $W$, $l_v$), send ($pk_v$, $h_v$, $\pi_v$, seed$||$role) to the adversary, who returns ($pk_v$, $h_v$, $\pi_v$, seed$||$role, $f$). 

      \begin{enumerate}
      \renewcommand{\labelenumi}{(\theenumi)}
      \item If $f=0$ or there is no entry ($sk_v$, seed$||$role, $h_v$, $\pi_v$) recorded, send $0$ to $v$, which means that verification fails. 
      \item If $f=1$ and there is an existing entry ($sk_v$, seed$||$role, $h_v$, $\pi_v$), compute $p=\tau$/$W$ and send ($h_v$, $\pi_v$) to LeaderCounter, which returns $\hat{l_v}$. If $\hat{l_v}\neq l_v$, sends 0 to $v$, i.e., verification fails; otherwise send 1 to $v$ meaning that verification succeeds. 
      \end{enumerate}

    \end{minipage}
    }
  \caption{The ideal functionality $\mathcal{F}_{\rm SORT}$
  \label{fig:uc:ideal:sort}}
\end{figure*}

\subsection{UC Composition of BLOWN}
\label{sec:sub:uc:composition}
We formulate two UC-style protocols (or hybrid experiments), which are presented in Fig.~\ref{fig:uc:blowm} and Fig.~\ref{fig:uc:sort}. 
The $\mathcal{\pi}_{\rm B}[\mathcal{F}_{\rm SIG}, \mathcal{F}_{\rm SORT}]$ conducts a hybrid experiment for BLOWN using an ideal hybrid functionality $[\mathcal{F}_{\rm SIG}, \mathcal{F}_{\rm SORT}]$ where $\mathcal{F}_{\rm SIG}$ is an ideal digital signature scheme and $\mathcal{F}_{\rm SORT}$ is an ideal functionality, performing three sortition-related functions as shown in Fig.~\ref{fig:uc:ideal:sort}. BLOWN is denoted as $\mathcal{\pi}_{\rm B}[\mathcal{\pi}_{\rm SIG}, \mathcal{\pi}_{\rm SORT}]$, which implements real protocols $\mathcal{\pi}_{\rm SIG}$ and $\mathcal{\pi}_{\rm SORT}$. Besides, $\pi_{\rm SORT}[\mathcal{F}_{\rm VRF}]$ is a protocol that realizes sortition-related functionalities, consisting of LeaderCounter, Sortition, and VerifySortition. These functionalities are consistent with the corresponding ones specified in Algorithm~\ref{alg:utility} except that $\pi_{\rm SORT}[\mathcal{F}_{\rm VRF}]$ uses an ideal functionality $\mathcal{F}_{\rm VRF}$ in Sortition and VerifySortition. In contrast, Algorithm~\ref{alg:utility} adopts a realistic VRF implementation. Let $\mathcal{A, Z, S}$ be respectively the adversary, environment, simulator, whose specific meanings should depend on the context. We first show that the following lemma~\ref{lemma:uc} holds for $\pi_{\rm SORT}[\mathcal{F}_{\rm VRF}]$. 

\begin{lemma}
\label{lemma:uc}
    With the same security parameter $\lambda$, for each probabilistic polynomial-time (PPT) $\{\mathcal{A}, \mathcal{Z}\}$, it holds that the protocol $\pi_{\rm SORT}[\mathcal{F}_{\rm VRF}]$ securely realizes $\mathcal{F}_{\rm SORT}$ under the $\mathcal{F}_{\rm VRF}$-hybrid model. 
\end{lemma}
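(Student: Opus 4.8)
The plan is to follow the standard universal-composability paradigm: I will construct a PPT simulator $\mathcal{S}$ that interacts with the ideal functionality $\mathcal{F}_{\rm SORT}$ in the ideal world, and argue that for every PPT pair $\{\mathcal{A}, \mathcal{Z}\}$ the ideal-world ensemble (comprising $\mathcal{F}_{\rm SORT}$ together with $\mathcal{S}$) is indistinguishable from the real-world hybrid ensemble (comprising $\pi_{\rm SORT}[\mathcal{F}_{\rm VRF}]$ together with $\mathcal{A}$). The central structural observation that makes this tractable is that both $\pi_{\rm SORT}[\mathcal{F}_{\rm VRF}]$ and $\mathcal{F}_{\rm SORT}$ invoke the \emph{same} deterministic LeaderCounter subroutine on the same inputs $(\text{role}, w_v, h_v, p)$ with $p = \tau/W$; hence the counter value $l_v$ is a fixed function of $(h_v, \pi_v)$ in either world, and any distinguishing advantage can only originate from the distribution of the VRF-related quantities $(h_v, \pi_v)$ and the verification bit $f$.

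First I would construct $\mathcal{S}$ explicitly. Because $\mathcal{F}_{\rm SORT}$ outsources the production of $(h_v, \pi_v)$ during \textbf{Sortition}, and of the bit $f$ during \textbf{VerifySortition}, to ``the adversary,'' the simulator $\mathcal{S}$ answers these queries by internally emulating a copy of $\mathcal{F}_{\rm VRF}$ and forwarding exactly the values that $\mathcal{F}_{\rm VRF}$ would return, maintaining the same input/output table that $\mathcal{F}_{\rm VRF}$ uses in the hybrid world. Thus when $\mathcal{F}_{\rm SORT}$ asks for $(h_v, \pi_v)$ on input $(sk_v, \text{seed}\|\text{role})$, $\mathcal{S}$ returns the pair that $\mathcal{F}_{\rm VRF}$ assigns to that input, and when $\mathcal{F}_{\rm SORT}$ asks for $f$ during verification, $\mathcal{S}$ returns the same bit $\mathcal{F}_{\rm VRF}$ would produce. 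All remaining messages between $\mathcal{Z}$ and $\mathcal{A}$ are relayed transparently. By this construction the triples $(h_v, \pi_v, l_v)$ emitted by \textbf{Sortition}, and the accept/reject decisions of \textbf{VerifySortition}, are computed from identically distributed inputs in the two worlds.

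Second I would verify indistinguishability function by function. For LeaderCounter the two specifications are textually identical, so outputs coincide whenever inputs do. For \textbf{Sortition} I would argue that the bookkeeping of $\mathcal{F}_{\rm SORT}$ — recording $(sk_v, \text{seed}\|\text{role}, h_v, \pi_v)$ on first use, returning the stored pair on repeated queries, and emitting an error when an inconsistent pair is presented — reproduces exactly the consistent-output (unique-provability) guarantee of $\mathcal{F}_{\rm VRF}$ in the hybrid protocol, so the output distribution of $(h_v, \pi_v, l_v)$ matches. For \textbf{VerifySortition} I would check that $\mathcal{F}_{\rm SORT}$ outputs acceptance precisely when a matching entry exists, $f = 1$, and the recomputed $\hat{l_v}$ equals $l_v$, which is exactly the event under which the hybrid protocol accepts given that $\mathcal{F}_{\rm VRF}$ returns $f = 1$ only for legitimately generated pairs. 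Combining the three cases, the view of $\mathcal{Z}$ is identically distributed in both worlds (a perfect simulation), which is stronger than the required computational indistinguishability.

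The main obstacle I anticipate is making the correspondence between $\mathcal{F}_{\rm SORT}$'s explicit record/error mechanism and the implicit consistency and soundness guarantees of the ideal $\mathcal{F}_{\rm VRF}$ fully rigorous. Concretely, I must show that the ``inconsistent pair $\Rightarrow$ error'' branch of \textbf{Sortition} and the ``no recorded entry $\Rightarrow$ reject'' branch of \textbf{VerifySortition} are triggered on exactly the same inputs as the corresponding rejections induced by $\mathcal{F}_{\rm VRF}$ in the hybrid world, so that no environment can separate the two by probing malformed or replayed VRF values. Since $\mathcal{F}_{\rm VRF}$ is an idealized VRF with perfect unique-provability and perfect verifiability, the simulator can reproduce these decisions exactly, yielding a perfect simulation with zero advantage; the remaining work is therefore a careful case analysis of the branching in the two experiments rather than a cryptographic reduction.
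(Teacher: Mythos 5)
Your proposal is correct and follows essentially the same route as the paper: construct a simulator that internally reproduces the $\mathcal{F}_{\rm VRF}$ behavior (the paper phrases this as $\mathcal{S}$ running $\mathcal{A}_{\mathcal{F}_{\rm VRF}}$ under the name of $\mathcal{F}_{\rm VRF}$), forward the resulting $(h_v,\pi_v)$ and verification bit $f$ to $\mathcal{F}_{\rm SORT}$, and conclude a perfect simulation because the deterministic LeaderCounter is identical in both worlds. Your function-by-function case analysis of the record/error branches is in fact more explicit than the paper's one-line claim that the simulation is ``straightforward to verify.''
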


\begin{proof}
Let $\mathcal{A}$ be an adversary that interacts with the nodes running $\pi_{\rm SORT}[\mathcal{F}_{\rm VRF}]$ under the $\mathcal{F}_{\rm VRF}$-hybrid model. We need to construct an ideal simulator $\mathcal{S}$ such that the view of any environment $\mathcal{Z}$ of an interaction with $\mathcal{A}$ and $\pi_{\rm SORT}[\mathcal{F}_{\rm VRF}]$ is exactly the same as that of an interaction with $\mathcal{S}$ and $\mathcal{F}_{\rm SORT}$. In our construction, the simulator $\mathcal{S}$ runs $\mathcal{A}_{\mathcal{F}_{\rm VRF}}$ (under the name of $\mathcal{F}_{\rm VRF}$) and simulates other possibly involved nodes. Here, the $\mathcal{A}_{\mathcal{F}_{\rm VRF}}$ who is attacking the VRF function is identically defined as the one attacking the ideal functionality $\mathcal{F}_{\rm VRF}^{Praos}$ presented in \cite{david2018ouroboros}. $\mathcal{S}$ is responsible for forwarding messages from $\mathcal{Z}$ and $\mathcal{A}_{\mathcal{F}_{\rm VRF}}$. Besides, $\mathcal{S}$ performs the following operations:
\begin{enumerate}
    \item{Simulating value and proof generation:} When $\mathcal{S}$ receives a message ($sk_v$, seed$||$role) in the ideal process from $\mathcal{F}_{\rm SORT}$, it simulates for $\mathcal{A}_{\mathcal{F}_{\rm VRF}}$ (under the name of $\mathcal{F}_{\rm VRF}$) the process of generating (Evaluated,  $s_{id}$, $h_v$, $\pi_v$), where $s_{id}$ represents a session id which is not explicitly presented in this paper for simplicity. $\mathcal{S}$ then forwards ($h_v$, $\pi_v$) to $\mathcal{F}_{\rm SORT}$.
    \item{Simulating verification:} When $\mathcal{S}$ receives a message ($pk_v$, $h_v$, $\pi_v$, seed$||$role) in the ideal process from $\mathcal{F}_{\rm SORT}$ meaning a verificaiton query is received, it simulates for $\mathcal{A}_{\mathcal{F}_{\rm VRF}}$ the process of VRF verification. Once receiving (Verified, $s_{id}$, $h_v$, $\pi_v$, $f$), $\mathcal{S}$ forwards ($pk_v$, $h_v$, $\pi_v$, seed$||$role, $f$) to $\mathcal{F}_{\rm SORT}$. 
\end{enumerate}

It is straightforward to verify that $\mathcal{S}$ perfectly simulates the adversary and other components. That is, for any PPT $\{\mathcal{A}, \mathcal{Z}\}$, $\mathcal{Z}$ cannot distinguish between its interaction with $\mathcal{A}$ and $\pi_{\rm SORT}[\mathcal{F}_{\rm VRF}]$ or $\mathcal{S}$ and $\mathcal{F}_{\rm SORT}$. Thus one can draw a conclusion that $\pi_{\rm SORT}[\mathcal{F}_{\rm VRF}]$ securely realizes $\mathcal{F}_{\rm SORT}$ under the $\mathcal{F}_{\rm VRF}$-hybrid model. 
\end{proof}

In the setting of \cite{david2018ouroboros}, the authors elegantly proved that there exists a realistic implementation of $\pi_{\rm VRF}$ that can securely realize the ideal $\mathcal{F}_{\rm VRF}$ under the Computational Diffie-Hellman (CDH) assumption in the random oracle model. Therefore with such a secure real-world implementation, our protocol $\mathcal{\pi}_{\rm SORT}[\pi_{\rm VRF}]$, abbreviated as $\mathcal{\pi}_{\rm SORT}$, is computationally indistinguishable from $\mathcal{\pi}_{\rm SORT}[\mathcal{F}_{\rm VRF}]$, and thus securely realizes $\mathcal{F}_{\rm SORT}$ according to Lemma~\ref{lemma:uc}.
Then for the analysis of the complicated BLOWN protocol, one can get rid of the repeated reduction proofs by conducting a hybrid experiment $\mathcal{\pi}_{\rm B}[\mathcal{F}_{\rm SIG}, \mathcal{F}_{\rm SORT}]$, where $\mathcal{F}_{\rm SORT}$ is the ideal signature scheme presented in \cite{1310743}. In Section~\ref{sec:sub:persistence:liveness}, we report the salient features that can be realized by $\mathcal{\pi}_{\rm B}[\mathcal{F}_{\rm SIG}, \mathcal{F}_{\rm SORT}]$ with the ideal combinatorial functionalities $[\mathcal{F}_{\rm SIG}, \mathcal{F}_{\rm SORT}]$. Thus we need to show that the real BLOWN protocol $\mathcal{\pi}_{\rm B}[{\pi_{\rm SIG}}, {\pi_{\rm SORT}}]$ ($\pi_{\rm SIG}$ is a secure EUF-CMA digital signature scheme) and $\mathcal{\pi}_{\rm B}[\mathcal{F}_{\rm SIG}, \mathcal{F}_{\rm SORT}]$ are computationally indistinguishable so that $\mathcal{\pi}_{\rm B}[{\pi_{\rm SIG}}, {\pi_{\rm SORT}}]$ can inherit all features of $\mathcal{\pi}_{\rm B}[\mathcal{F}_{\rm SIG}, \mathcal{F}_{\rm SORT}]$. 

\begin{theorem}
\label{thm:uc}
  With the same security parameter $\lambda$, for each PPT $\{\mathcal{A}, \mathcal{Z}\}$, it holds that there is a PPT $\mathcal{S}$ such that
  \begin{equation}
    {\rm EXEC}_{\mathcal{\pi}_{\rm B}[\pi_{\rm SIG}, \pi_{\rm SORT}]}^{\mathcal{A, Z}} \approx 
    {\rm EXEC}_{\mathcal{\pi}_{\rm B}[\mathcal{F}_{\rm SIG}, \mathcal{F}_{\rm SORT}]}^{\mathcal{S, Z}},
  \end{equation}
  where ``$\approx$'' means computationally indistinguishable. 
\end{theorem}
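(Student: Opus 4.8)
The plan is to derive Theorem~\ref{thm:uc} directly from the universal composition theorem of Canetti~\cite{959888}, using the two subprotocol realizations already established as the building blocks. The overall strategy is to view BLOWN as a protocol running in the $[\mathcal{F}_{\rm SIG}, \mathcal{F}_{\rm SORT}]$-hybrid model and then substitute each ideal functionality by its secure real-world realization one at a time, chaining the resulting indistinguishabilities by transitivity. Concretely, I would exhibit the hybrid chain
\begin{equation}
  \mathcal{\pi}_{\rm B}[\pi_{\rm SIG}, \pi_{\rm SORT}] \approx \mathcal{\pi}_{\rm B}[\pi_{\rm SIG}, \mathcal{F}_{\rm SORT}] \approx \mathcal{\pi}_{\rm B}[\mathcal{F}_{\rm SIG}, \mathcal{F}_{\rm SORT}],
\end{equation}
and argue each link separately before combining the intermediate simulators.

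For the first link, recall that Lemma~\ref{lemma:uc} shows $\pi_{\rm SORT}[\mathcal{F}_{\rm VRF}]$ securely realizes $\mathcal{F}_{\rm SORT}$, and that combining this with the CDH-based realization of $\mathcal{F}_{\rm VRF}$ from~\cite{david2018ouroboros} yields that the fully instantiated $\pi_{\rm SORT}$ securely realizes $\mathcal{F}_{\rm SORT}$ in the random oracle model. Treating every invocation of $\mathcal{F}_{\rm SORT}$ inside $\mathcal{\pi}_{\rm B}$ as a subroutine call, the composition theorem then guarantees that replacing those calls by instances of $\pi_{\rm SORT}$ leaves the environment's view unchanged, producing a simulator $\mathcal{S}_1$ assembled from the sortition simulator of Lemma~\ref{lemma:uc}. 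For the second link, since $\pi_{\rm SIG}$ is a secure EUF-CMA signature scheme, it securely realizes the ideal signature functionality $\mathcal{F}_{\rm SIG}$ of~\cite{1310743}; applying the composition theorem a second time to $\mathcal{\pi}_{\rm B}[\cdot, \mathcal{F}_{\rm SORT}]$ lets me swap $\mathcal{F}_{\rm SIG}$ for $\pi_{\rm SIG}$, yielding a simulator $\mathcal{S}_2$. Composing $\mathcal{S}_1$ and $\mathcal{S}_2$ into a single PPT simulator $\mathcal{S}$ and invoking transitivity of $\approx$ delivers the claimed indistinguishability.

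The main obstacle will be verifying that BLOWN meets the structural hypotheses the composition theorem requires, rather than any fresh cryptographic reduction. Specifically, I must confirm that $\mathcal{\pi}_{\rm B}$ calls $\mathcal{F}_{\rm SIG}$ and $\mathcal{F}_{\rm SORT}$ in a subroutine-respecting fashion, that only polynomially many subroutine instances arise over an execution (so the internal hybrid over instances remains polynomial), and---most delicately---that the two functionalities are invoked on disjoint internal state, so that the two substitutions can be performed sequentially without interference. In BLOWN this separation is natural: $\mathcal{F}_{\rm SORT}$ is touched only during initialization and block-finalization verification, whereas $\mathcal{F}_{\rm SIG}$ is touched only when authenticating the messages $m$, $m_T$, and $m_B$. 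Nonetheless, making this independence rigorous requires careful bookkeeping of which party queries which functionality and confirming that no shared randomness, session identifiers, or keys leak across the two, which is where essentially all the technical care resides.
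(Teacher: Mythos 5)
Your proposal matches the paper's own proof essentially step for step: both establish the same two-link hybrid chain through the intermediate protocol $\mathcal{\pi}_{\rm B}[\pi_{\rm SIG}, \mathcal{F}_{\rm SORT}]$, invoke Lemma~\ref{lemma:uc} together with the UC composition theorem to replace $\pi_{\rm SORT}$ by $\mathcal{F}_{\rm SORT}$, then use the EUF-CMA realization of $\mathcal{F}_{\rm SIG}$ for the second substitution, and finally assemble the intermediate simulators into one $\mathcal{S}$ by transitivity. Your added remarks on verifying the subroutine-respecting structure and disjointness of the two functionalities' state are a reasonable refinement of the same argument rather than a departure from it.
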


\begin{proof}
    With a real digital signature protocol $\pi_{\rm SIG}$, we obtain ${\mathcal{\pi}_{\rm B}[\pi_{\rm SIG}, \mathcal{F}_{\rm SORT}]}$, which is a protocol under the $\mathcal{F}_{\rm SORT}$-hybrid model. From Lemma~\ref{lemma:uc}, one can see that it holds for each PPT $\mathcal{A}$ and $\mathcal{Z}$, the protocol $\pi_{\rm SORT}$ securely realizes $\mathcal{F}_{\rm SORT}$. According to the universal composition theorem, it holds that for any adversary $\mathcal{A}_{\mathcal{F}_{\rm VRF}}$, there exists an adversary $\mathcal{A}_{\mathcal{F}_{\rm SORT}}$ such that for any environment $\mathcal{Z}$, we have 
    \begin{equation}
    \label{eq:uc1}
        {\rm EXEC}_{ \mathcal{\pi}_{\rm B}[\pi_{\rm SIG}, \pi_{\rm SORT}] } ^ { \mathcal{A}_{\mathcal{F}_{\rm VRF}}, \mathcal{Z} } \approx 
        {\rm EXEC}_{ \mathcal{\pi}_{\rm B}[\pi_{\rm SIG}, \mathcal{F}_{\rm SORT}] } ^ { \mathcal{A}_{\mathcal{F}_{\rm SORT}}, \mathcal{Z}  },
    \end{equation}

    Let $\mathcal{\pi}_{\rm B}[\mathcal{F}_{\rm SIG}, \mathcal{F}_{\rm SORT}]$ be a protocol under the $\mathcal{F}_{\rm SIG}$-hybrid model with a fixed $\mathcal{F}_{\rm SORT}$. Making use of an EUF-CMA digital signature scheme $\mathcal{\pi}_{\rm SIG}$ that securely realizes $\mathcal{F}_{\rm SIG}$, we have 
    \begin{equation}
    \label{eq:uc2}
        {\rm EXEC}_{ \mathcal{\pi}_{\rm B}[\pi_{\rm SIG}, \mathcal{F}_{\rm SORT}] } ^ { \mathcal{A}_{\mathcal{F}_{\rm SIG}}, \mathcal{Z}  } \approx
        {\rm EXEC}_{ \mathcal{\pi}_{\rm B}[\mathcal{F}_{\rm SIG}, \mathcal{F}_{\rm SORT}] } ^ { \mathcal{A}_{\mathcal{F}_{\rm 0}}, \mathcal{Z}  },
    \end{equation}
 where $\mathcal{A}_{\mathcal{F}_{\rm 0}}$ is a dumb adversary. Combining (\ref{eq:uc1}) and (\ref{eq:uc2}), one can construct the simulator $\mathcal{S}$ that can run $\mathcal{A}_{\mathcal{F}_{\rm SORT}}, \mathcal{A}_{\mathcal{F}_{\rm VRF}}, \mathcal{A}_{\mathcal{F}_{\rm 0}}$ and forward messages between the adversary and $\mathcal{Z}$ so that $\mathcal{Z}$ cannot distinguish the interactions with $\mathcal{\pi}_{\rm B}[\pi_{\rm SIG}, \pi_{\rm SORT}]$ from those with $\mathcal{\pi}_{\rm B}[\mathcal{F}_{\rm SIG}, \mathcal{F}_{\rm SORT}]$. 
\end{proof}

\subsection{Persistence and liveness}
\label{sec:sub:persistence:liveness}
We first formulate a state machine $\mathbb{S}$ with the following four states: \texttt{START}, \texttt{LEADER}, \texttt{COMMIT}, \texttt{FINAL}.

\begin{definition}{(\texttt{START} State).}
The system is in \texttt{START} state when the following conditions hold: (1) $|\{v|v\in V, l_v>0\}|>1$; (2) the honest nodes that accepted $B_v^{k-1}$ in the last epoch have finished initialization. 
\end{definition}

\begin{definition}{(\texttt{LEADER} State).}
The system is in \texttt{LEADER} state when the following conditions hold: (1) there is a node $v$ with $l_v>0$ and $\forall u\in V\setminus\{v\}, l_u=0$; (2) $j=0$; (3) the size of $v$'s transaction stack $|txp_v|=0$.
\end{definition}

\begin{definition}{(\texttt{COMMIT} State).}
The system is in \texttt{COMMIT} state when the following conditions hold: (1) there is a node $v$ with $l_v>0$ and $\forall u\in V\setminus\{v\}, l_u=0$;  (2) $0<j<c\cdot i_k$.
\end{definition}

\begin{definition}{(\texttt{FINAL} State).}
The system is in \texttt{FINAL} state if one of the following two conditions holds: (1) each honest node $v$ has received a valid $m_B$ and accepted the block $B_u^{k}$; (2) honest nodes did not receive a block in the $(c+1)i_k$-th round.
\end{definition}

Garay \textit{et al.} \cite{garay2015bitcoin} proved that a secure distributed ledger should satisfy persistence and liveness properties. Let $tx^j_i$ be the $j$-th transaction of the $i$-th block (the $0$-th block is the genesis block). We say $tx^j_i$ is $t$\emph{-stable} when the current block index is larger than $i + t$, where $t>0$. Then the persistence and liveness properties that BLOWN should guarantee can be defined as follows.

\begin{definition}{\textbf{Persistence and liveness.}}
\label{thm:persistence}
   Persistence states that if an honest node $v$ proclaims a transaction $tx^j_i$ as $t$-stable, other honest nodes, if queried, either report the same result or report error messages. On the other hand, the liveness property states that if an honest node generates the transaction $tx^j_i$ and contends to broadcast it in phase two, the BLOWN protocol can add it to the blockchain within finite epochs w.h.p. 
\end{definition}

Kiayias and Panagiotakos \cite{kiayias2015speed} showed that persistence and liveness can be derived from the following three more concrete properties: chain growth, common prefix, and chain quality.

\begin{definition}[Chain growth property]
\label{def:growth}
    Consider two chains $\mathcal{C}_1, \mathcal{C}_2$ possessed by two honest nodes at the onset of two epochs $e_1<e_2$ with $e_2$ at least $k$ epochs ahead of $e_1$. It holds that $len(\mathcal{C}_2)-len(\mathcal{C}_1)\geq \tau\cdot k$, where $\tau$ is the speed coefficient with $\tau\in (0,1]$ and $ k\in \mathbb{N}$.
\end{definition}

\begin{definition}[Common prefix property]
\label{def:common:prefix}
    The chains $\mathcal{C}_1, \mathcal{C}_2$ possessed by two honest nodes at the onset of the epoch $e_1<e_2$ satisfy $\mathcal{C}_1^{\lceil k} \preceq \mathcal{C}_2$, where $k\in \mathbb{N}$ and $\mathcal{C}_1^{\lceil k}$ denotes the chain obtained by removing the last $k$ blocks from $\mathcal{C}_1$, and $\preceq$ denotes the prefix relation.
\end{definition}

\begin{definition}[Chain quality property]
\label{def:quality}
    Consider any portion of length at least $l$ of the chain possessed by an honest party at the onset of an epoch. The ratio of the blocks originated from the adversary is at most $1-\mu$, where $\mu\in (0,1]$  is the chain quality coefficient.
\end{definition}

In the remainder of this section, we prove that $\mathcal{\pi}_{\rm B}[\mathcal{F}_{\rm SIG}, \mathcal{F}_{\rm SORT}]$ satisfies chain growth, common prefix, and chain quality properties, indicating that BLOWN guarantees persistence and liveness.

\subsubsection{Chain Growth}

$\mathcal{\pi}_{\rm B}[\mathcal{F}_{\rm SIG}, \mathcal{F}_{\rm SORT}]$ meets chain growth as claimed in Theorem~\ref{thm:chain:growth}. We prove Theorem~\ref{thm:chain:growth} by two steps: 1) each epoch must be terminated within a finite time (or $\mathbb{S}$ never enters a deadlock); 2) the chain growth property should quantify the blockchain growing speed such that new blocks are added to a chain with a speed coefficient $\tau \in (0,1]$. Concretely, we first prove that $\mathcal{\pi}_{\rm B}[\mathcal{F}_{\rm SIG}, \mathcal{F}_{\rm SORT}]$ ensures robust communication channels, as without which the protocol can hardly proceed. With such a communication channel, $\mathcal{\pi}_{\rm B}[\mathcal{F}_{\rm SIG}, \mathcal{F}_{\rm SORT}]$ supports a successful leader election, which provides correctness, efficiency, and practicality. Then we perform an analysis on the $\mathbb{S}$ of BLOWN to end the proof of Theorem~\ref{thm:chain:growth}. 

\begin{theorem}
\label{thm:chain:growth}
    It holds for $\mathcal{\pi}_{\rm B}[\mathcal{F}_{\rm SIG}, \mathcal{F}_{\rm SORT}]$ that each epoch can terminate in $O(cw_{max}\lambda)$, and there are $O(cw_{max})$ transactions added to the blockchain at each epoch w.h.p., at the speed coefficient (following Definition~\ref{def:growth}) $\tau=0.5$.
\end{theorem}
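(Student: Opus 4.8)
The plan is to establish the two quantitative claims separately—epoch termination in $O(cw_{max}\lambda)$ rounds, and a growth rate of $\tau=0.5$ with $O(cw_{max})$ committed transactions—both built on top of two results I would invoke as black boxes: the jamming-resistant channel of Theorem~\ref{thm:jamming:resistant}, which ensures that a single successful broadcast is completed within $O(\lambda)$ rounds w.h.p. even though the adversary is $((1-\epsilon),T)$-bounded and the probabilities $p_v$ are only adaptively tuned, and the successful leader election of Theorem~\ref{thm:leader:success}, which ensures that $P_1$ terminates with a unique surviving potential leader mutually recognized by all honest followers in slot two.

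For termination I would first bound $i_k$, the length of $P_1$. Since every initial counter satisfies $l_v^0\in\{0,\dots,w_v\}$, the largest counter in the network is at most $w_{max}$, and a potential leader is demoted only after its counter is decremented to $0$, which happens once per successful broadcast it overhears; hence $O(w_{max})$ successful broadcasts suffice to leave a single survivor. Multiplying by the $O(\lambda)$-round cost of each successful broadcast from Theorem~\ref{thm:jamming:resistant} gives $i_k=O(w_{max}\lambda)$ w.h.p., and since $P_2$ runs for exactly $c\cdot i_k$ rounds the whole epoch lasts $i_k+c\cdot i_k=O(cw_{max}\lambda)$ rounds. I would then walk through the state machine $\mathbb{S}$ to rule out deadlock: the passage from \texttt{START} to \texttt{LEADER} is precisely Theorem~\ref{thm:leader:success}; progress through \texttt{COMMIT} is forced by the bounded counter $j<c\cdot i_k$; and \texttt{FINAL} is always reached because its second defining condition is a timeout at round $(c+1)i_k$, so even a silent or adversarial leader cannot stall the protocol beyond $O(cw_{max}\lambda)$ rounds.

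For the growth rate I would count committed transactions and block-producing epochs. During the $c\cdot i_k=O(cw_{max}\lambda)$ rounds of $P_2$ each successful transaction broadcast again costs $O(\lambda)$ rounds by Theorem~\ref{thm:jamming:resistant}, so the leader writes $O(cw_{max})$ entries into $txp_v$ w.h.p., which bounds the per-epoch transaction count. To obtain $\tau=0.5$ in the sense of Definition~\ref{def:growth}, I would show that any epoch with an honest elected leader reaches \texttt{FINAL} through its first condition—a valid $m_B$ is broadcast and, after \emph{VerifySortition} succeeds, adopted by every honest node—so that such an epoch contributes exactly one block. Because the leader counter is assigned by Sortition with success probability proportional to a node's deposit and the adversary holds less than half of the total wealth $W$, honest nodes are elected in at least half of the epochs; a concentration bound over any $k$ consecutive epochs then yields at least $k/2$ adopted blocks w.h.p., i.e. $len(\mathcal{C}_2)-len(\mathcal{C}_1)\geq 0.5\,k$.

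The hardest part will be the termination half, specifically the coupling during $P_1$ among three quantities that all move simultaneously: the leader-counter decrements, the multiplicatively adapting transmission probabilities $p_v$, and the adversary's jamming budget tracked by $T_v$. Converting the informal statement ``a successful broadcast happens soon'' into the clean $O(w_{max}\lambda)$ bound requires arguing that the $O(w_{max})$ successful broadcasts needed for election do not themselves drive contention into a regime that repeatedly erases progress, which is exactly where Theorem~\ref{thm:jamming:resistant} must be applied with care rather than cited verbatim. By comparison, the $\tau=0.5$ argument is routine once the wealth-proportionality of Sortition leadership is made precise.
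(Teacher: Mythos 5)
Your proposal follows essentially the same route as the paper's proof: it invokes Theorem~\ref{thm:jamming:resistant} and Theorem~\ref{thm:leader:success} as the two pillars to bound the epoch length by $O(cw_{max}\lambda)$ and the per-epoch transaction count by $O(cw_{max})$, rules out deadlock by walking the state machine $\mathbb{S}$ (including the timeout at round $(c+1)i_k$), and derives $\tau=0.5$ from the adversary holding less than half the total wealth under fair Sortition. The differences are only expository—you make explicit the ``$O(w_{max})$ successful broadcasts, each costing $O(\lambda)$ rounds'' decomposition that the paper leaves implicit inside the efficiency claim of Theorem~\ref{thm:leader:success}, and you add a concentration step over $k$ epochs where the paper simply asserts the bound—so the argument is correct and matches.
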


To start with, we need to prove Theorem~\ref{thm:time}, which states that $\mathcal{\pi}_{\rm B}[\mathcal{F}_{\rm SIG}, \mathcal{F}_{\rm SORT}]$ can ensure a robust communication channel. Recall that the distance between any two nodes is bounded by $R_0=(P/\beta\theta)^{1/\alpha}$ in a sinlge-hop network. Therefore for $\forall v\in V$, $D_{R_0}(v)$ can cover all the neighbors of node $v$ so that if at least one node $u \in N_{R_0}(v)$ transmits a message, $v$ would either receive the message or sense a busy channel. $D_{R_0}(v)$ and $N_{R_0}(v)$ are later used for calculating aggregated transmission probability of $N_{R_0}(v)$ and the channel contention within $D_{R_0}(v)$.

\begin{theorem}\label{thm:time}
\label{thm:jamming:resistant}
  If $N_{R_0}(v)\neq \varnothing$, it holds true for BLOWN that runing at least $F=\Omega((T\log N)/\epsilon+(\log N)^4/(\gamma\epsilon)^2)$ rounds leads to at least $(1-\epsilon\beta^\prime)\rho e^{\frac{-\rho}{1-\hat{p}}}F$ rounds of successful transmissions against any $((1-\epsilon),T)$-bounded adversary w.h.p., where $\gamma=O(1/(\log T+\log \log N))$ and $\rho$ is a constant.
\end{theorem}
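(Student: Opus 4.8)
The plan is to track the aggregate transmission probability $p_V=\sum_{u\in N_{R_0}(v)}p_u$ and to show, via an amortized potential argument, that the multiplicative update rule keeps $p_V$ inside a constant band around $\rho$ for all but a $\gamma$-fraction of rounds, so that each non-jammed round in this band yields a decodable reception at $v$ with probability at least $\rho e^{-\rho/(1-\hat p)}$. I would partition the $F$ rounds into three disjoint types --- \emph{idle} (no neighbor transmits, so $p_v$ is increased on Line~5 of Algorithm~\ref{alg:phase1:sub}), \emph{busy} (a collision or a jammed round, so $p_v$ is decreased), and \emph{successful} (exactly one transmitter in $N_{R_0}(v)$, whose signal satisfies $SINR\geq\beta$ at $v$ because any single sender within $R_0=(P/\beta\theta)^{1/\alpha}$ is decodable against noise at most $\theta$) --- bound the count of each type over a window of length $T$, and then sum over the $\Theta(F/T)$ windows.

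First I would establish two-sided control of $p_V$. For the upper bound, a drift argument shows that whenever $p_V>\rho$ the probability of a non-idle round is bounded below by a constant, so the multiplicative decrease by $(1+\gamma)^{-1}$ fires often enough to pull $p_V$ back; formalizing this with the potential $\Phi=\max\{0,\ln(p_V/\rho)\}$ shows $p_V$ exceeds $\rho$ only $O(\gamma F)$-often. For the lower bound, when $p_V$ drops below the band the channel is idle with constant probability in every non-jammed slot, triggering the multiplicative increase and restoring $p_V$ within $O((\log N)/\gamma)$ rounds. The second ingredient is calibration of the window estimate: I would prove $T_v=\Theta(T)$ after a warm-up, because the additive rule (decrement on an idle round, increment by $2$ when no idle round appears in $T_v$ consecutive rounds) is a self-stabilizing estimator of the adversarial window --- if $T_v\ll T$ the adversary's jamming makes the ``no-idle'' test fire and push $T_v$ up, while if $T_v\gg T$ the guaranteed idle rounds decrement it.

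With $p_V$ in the band and $T_v=\Theta(T)$, I would invoke $((1-\epsilon),T)$-boundedness: every window of $T$ rounds contains at least $\epsilon T$ non-jammed rounds, and in such a round the probability that exactly one node in $N_{R_0}(v)$ transmits equals $\sum_u p_u\prod_{w\neq u}(1-p_w)\geq p_V\prod_u(1-p_u)\geq \rho\,e^{-\rho/(1-\hat p)}$, using $1-x\geq e^{-x/(1-\hat p)}$ for $x\leq\hat p$ to get $\prod_u(1-p_u)\geq e^{-p_V/(1-\hat p)}$. Summing the per-round success indicators over all $F$ rounds and applying an Azuma/Chernoff bound to the resulting supermartingale --- legitimate because, conditioned on the good-band event, the success probability is uniformly lower bounded --- yields at least $(1-\epsilon\beta')\rho e^{-\rho/(1-\hat p)}F$ successes with probability $1-1/N^{c}$. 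The two terms of $F=\Omega((T\log N)/\epsilon+(\log N)^4/(\gamma\epsilon)^2)$ are exactly what the argument consumes: the first guarantees $\Omega(\log N)$ full adversary windows so the per-window guarantees hold w.h.p., and the second supplies enough rounds for the $1/\gamma$-slow dynamics to concentrate and for the union bound over the $N$ nodes to close.

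The hardest part is the coupled, adversary-aware amortized analysis of $p_V$ and the $\{T_v\}$ together: the adversary is adaptive and can deliberately jam to make busy rounds masquerade as collisions (driving $p_V$ down) or withhold jamming to fake idle rounds (driving $p_V$ up and $T_v$ down), while nodes update using only local channel observations. I expect the crux to be showing that these perturbations are self-correcting within $O(T/\epsilon)$ rounds and that the total number of ``bad'' rounds (outside the band, or with miscalibrated $T_v$) is at most an $O(\epsilon\beta')$-fraction, since this is what produces the $(1-\epsilon\beta')$ factor. The concentration step also requires care because the increments are neither independent nor bounded by a fixed constant once $p_V$ can be as large as $|N_{R_0}(v)|\hat p$, so truncating at the good band before applying the martingale inequality is essential.
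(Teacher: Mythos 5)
Your overall strategy is the same one the paper uses (and both descend from the Richa et al.\ jamming-resistant MAC analysis): keep the aggregate transmission probability in a constant band around $\rho$ for all but an $\epsilon\beta'$-fraction of the time, lower-bound the per-round success probability in good rounds by $\sum_u p_u\prod_{w\neq u}(1-p_w)\geq p_V e^{-p_V/(1-\hat{p})}\geq\rho e^{-\rho/(1-\hat{p})}$, and concentrate. The middle step, however, is packaged quite differently. The paper partitions $D_{R_0}(v)$ into six equal sectors, controls the per-sector aggregate $p_S$ rather than $p_V$ directly (whence $\rho=6\cdot 5e^2=30e^2$), and argues subframe-by-subframe: it splits $F$ into $(c\log N)/\epsilon$ subframes of length $c(T+(\log N)^3/(\gamma^2\epsilon))$, imports from the Richa et al.\ paper the facts that each subframe contains a round with $p_S\leq 5$ and that overgrown individual $p_v$ collapse to $1/N^2$, proves via a supermartingale drift bound and Azuma--Hoeffding that $p_S\leq 5$ in one subframe implies $p_S\leq 5e^2$ throughout the next w.m.p., and then Chernoff-bounds the number of bad subframes by $\epsilon\beta'/6$ per sector. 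Your direct potential argument $\Phi=\max\{0,\ln(p_V/\rho)\}$ on the whole neighborhood replaces the sector/subframe machinery; in a single-hop network this is a legitimate simplification, and your drift claim is essentially the content of the paper's Lemmas~\ref{lemma:ps2}--\ref{lemma:ps4} in per-round rather than per-subframe form. Two remarks on the differences. First, your explicit lower-bound control on $p_V$ is something the paper does not prove but actually needs: $xe^{-x/(1-\hat{p})}$ is decreasing only for $x>1-\hat{p}$, so the paper's final inequality $p_Ve^{-p_V/(1-\hat{p})}\geq\rho e^{-\rho/(1-\hat{p})}$ silently assumes $p_V$ is bounded away from $0$; your plan patches this. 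Second, your claim that $T_v=\Theta(T)$ after a warm-up is stronger than what is needed or what the paper establishes --- the paper only ever uses a one-sided condition ($T_v\leq\sqrt{F}$ at the start of a subframe), and the two-sided calibration you propose would require its own nontrivial proof; you could weaken that step to an upper bound on $T_v$ without losing anything downstream.
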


\begin{proof}
To prove Theorem~\ref{thm:jamming:resistant}, we divide $D_{R_0}(v)$ into six sectors of equal angles centered at $v$, and denote an arbitrary sector as $S$. Then we refer to $\bar{p_v}=\sum_{w\in S\setminus\{v\}} p_w$ as the aggregated transmission probability of the neighbors of $v$, and  $p_S$ denotes the aggregated transmission probability of all the nodes in $S$. Lemma~\ref{lemma:ps3} can be proved utilizing Lemma~\ref{lemma:ps1} and \ref{lemma:ps2}, whose proofs can be found in \cite{richa2010jamming}. We divide the $F$ into $(c\log N)/\epsilon$ consecutive subframes, with each consisting of $c(T+(\log N)^3/(\gamma^2\epsilon))$ rounds. 

\begin{lemma}
\label{lemma:ps1}
Consider any node $v$ in $S$. If $\bar{p_v}>5-\hat{p}$ during all rounds of a subframe $I'$ of $I$ and at the beginning of $I'$, $T_v\leq \sqrt{F}$, then $p_v$ is at most $1/N^2$ at the end of $I'$, w.h.p.
\end{lemma}
\begin{lemma}
\label{lemma:ps2}
For any subframe $I$ in $F$ and any initial value of $p_S$ in $I$ there is at least one round in $I$ with $p_S\leq 5$ w.h.p.
\end{lemma}
\begin{lemma}
\label{lemma:ps3}
For any subframe $I_k$ in $I$, if $p_S\leq 5$ occurs during the past subframe $I_{k-1}$, $p_S\leq 5e^2$ holds throughout $I_k$ w.m.p.
\end{lemma}
\begin{proof}
Let $p_S^t$ be the cummulative transmission probability of nodes in $S$ at round $t$.
Assume the probability that all nodes in $S$ are not transmitting is $q_0$, the probability that only one node in $S$ is transmitting is $q_1$, and the probability that at least two nodes in $S$ are transmitting is $q_2$. Then one can obtain the upper bound of the expectation of $p_S^{t+1}$ as follows:
\begin{equation}
\label{eq:ps}
\begin{aligned}
E[p_S^{t+1}]\leq q_0(1+\gamma)p_S^{t}+q_1(1+\gamma)^{-1} p_S^{t}+q_2\cdot p_S^{t}.
\end{aligned}
\end{equation}
This upper bound holds true even if we consider the rounds when $c_v>T_v$, which decreases $p_S$. Let $E_2$ be the event when at least two nodes in $S$ transmit. If $E_2$ does not happen, $q_2=0$ and Eq.~\eqref{eq:ps} becomes  
\begin{equation}
\label{eq:ps2}
\begin{aligned}
E[p_S^{t+1}]=\frac{q_0}{q_0+q_1}(1+\gamma)p_S^t+\frac{q_1}{q_0+q_1}(1+\gamma)^{-1} p_S^t.
\end{aligned}
\end{equation}
If $p_S>5$, we have $q_1 \geq p_S\cdot q_0\geq 5q_0$. Hence,
\begin{equation}
\label{eq:ps3}
\begin{aligned}
E[p_S^{t+1}]\leq [\frac{(1+\gamma)}{6}+\frac{5(1+\gamma)^{-1}}{6}]p_S^t \leq (1+\gamma)^{-1/2}p_S^t.
\end{aligned}
\end{equation}
Considering the case where $E_2$ might happen, one can rewrite $E[p_S^{t+1}]$ as 
\begin{equation}
\label{eq:ps4}
\begin{aligned}
E[p_S^{t+1}]\leq [q_2+(1-q_2)(1+\gamma)^{-1/2}]p_S^t.
\end{aligned}
\end{equation}
Since $q_2=1-q_0-q_1<1-p_Se^{\frac{p_S}{1-\hat{p}}}$, we have
\begin{equation}
\label{eq:ps5}
\begin{aligned}
E[p_S^{t+1}]\leq [1-p_Se^{\frac{p_S}{1-\hat{p}}}+(1+\gamma)^{-1/2}p_Se^{\frac{p_S}{1-\hat{p}}}]p_S^t.
\end{aligned}
\end{equation}
Suppose in the subframe $I_{k-1}$ there is a round $t$ with $p_S>5$. One can find a time interval $I'\subseteq I_{k-1}$, which satisfies $5<p_S<5e$ during $I'$, $p_S<5$ just before $I'$, and $p_S>5e$ at the end of $I'$. We intend to bound the probability at which such $I'$ happens.
Let $\phi=\log_{1+\gamma}[(1-p_Se^{\frac{p_S}{1-\hat{p}}}+(1+\gamma)^{-1/2}p_Se^{\frac{p_S}{1-\hat{p}}})^{-1}]$. Since $\gamma$ is sufficiently small, we have $\phi\in(0.5,1)$ and $E[p_S^{t+1}]\leq (1+\gamma)^{-\phi}$. On the other hand, $p_S^{t+1}\leq(1+\gamma)p_S^t\leq(1+\gamma)^{2\phi}p_S^t$. Then let $X_S^{t}=\log_{(1+\gamma)}p_S^t+\sum_{i=0}^{t-1}\phi_k$ and $X_S^0=\log_{(1+\gamma)}p_S^0$, it is easy to verify that $E[X_S^{t+1}]=X_t$ and $X_S^{t+1}\leq X_S^{t}+c_{t+1}$, where $c_{t+1}=3\phi_t$. Leveraging the Azuma–Hoeffding Inequality, it holds that 
\begin{equation}
\label{eq:ps6}
\begin{aligned}
P[X_S^{T}-X_S^0>\delta]\leq e^{\frac{-\delta^2}{2\sum_{k=1}^{T}c_k^2}},
\end{aligned}
\end{equation}
for $\delta=1/\gamma+\sum_{k=0}^{T-1}\phi_k$. Therefore
\begin{equation}
\label{eq:ps7}
\begin{aligned}
P[\log_{(1+\gamma)}p_S^T/p_S^0 > 1/\gamma]\leq e^{\frac{-\delta^2}{2\sum_{k=0}^{T-1}(3\phi_k)^2}}.
\end{aligned}
\end{equation}
Let $\psi=\sum_{k=0}^{T-1}(\phi_k)^2$, we have $e^{\frac{-\delta^2}{2\sum_{k=0}^{T-1}(3\phi_k)^2}}=\frac{(\psi+1/\gamma)^2}{18\psi}\geq\frac{1}{9\gamma}$. Hence, 
\begin{equation}
\label{eq:ps8}
\begin{aligned}
P[\log_{(1+\gamma)}p_S^T/p_S^0 > 1/\gamma]\leq e^{-1/9\gamma}\leq \frac{1}{\log^cN},
\end{aligned}
\end{equation}
for any constant $c$ if $\gamma=O(1/(\log T+\log \log N)$. Note that $\log_{(1+\gamma)}p_S^T/p_S^0 > 1/\gamma$ indicates $p_S^T/p_S^0>e$. Considering $p_S^0>5$ at the beginning of a subframe $I'$, $P[\log_{(1+\gamma)}p_S^T/p_S^0 > 1/\gamma]$ is the probability at which the aggregated probability of the nodes in $S$ exceeds $5e$ at the end of $I'$. Hence we prove that if $p_S<5$ holds at the beginning of $I_{k-1}$, $p_S<5e$ holds throughout $I_{k-1}$ w.m.p. Also, it is analogous to prove that if $p_S<5e$ is true at the beginning of $I_{k-1}$, $p_S<5e^2$ holds throughout $I_{k-1}$ w.m.p. Hence, if $p_S\leq 5$ happens during the past subframe $I_{k-1}$, $p_S<5e$ holds throughout $I_{k}$ w.m.p. Since $p_S<5e$ holds at the beginning of $I_{k}$, $p_S<5e^2$ holds throughout $I_{k}$ w.m.p., which proves the lemma.
\end{proof}

\begin{lemma}
\label{lemma:ps4}
$(1-\epsilon\beta^\prime)$-fraction of subframes in $F$ satisfy $p_V\leq \rho$ w.h.p, where $p_V=\sum_{v\in V} p_v$ is the aggregated probability of all nodes, and $\epsilon, \beta^\prime, \rho$ are constants.
\end{lemma}
\begin{proof}
Let us focus on a fixed subframe $I_k$ and its previous subframe $I_{k-1}$. Lemma~\ref{lemma:ps2} indicates that there is at least one round in $I_{k-1}$ with $p_S\leq 5$ w.h.p. Then it follows from Lemma~\ref{lemma:ps3} that if there is at least one round in $I_{k-1}$ with $p_S\leq 5$, $p_S<5e^2$ holds throughout $I_{k}$ w.m.p. Define a subframe $I$ to be \emph{good} if $p_S\leq 5e^2$ holds throughout $I$, and otherwise $I$ is \emph{bad}. Then it follows from the Chernoff bounds that at most $\epsilon\beta^\prime/6$ of the subframes in $F$ are bad w.h.p. Since $D_{R_0}$ consists of six sectors and covers all nodes in $V$, there is at least $(1-\epsilon\beta^\prime)$-fraction of subframes in which the aggregated probability $p_V=\sum_{v\in V} p_v=\sum_{v\in N_{R_0}(v)} p_v$ is bounded by $\rho=6\times5e^2=30e^2$, which completes the proof.
\end{proof}
Then, the probability on which there exists one successful transmission is given by
\begin{equation}
\begin{aligned}
\sum_{v\in V} p_v \prod_{w\in V\backslash v}(1-p_w) &\geq \sum_{v\in V} {p_v\prod_{w\in V}}(1-p_w)\\
&\geq \sum_{v\in V} {p_v\prod_{w\in V}} e^{\frac{-p_w}{1-\hat{p}}}\\
&= \sum_{v\in V} p_v e^{\frac{-p_V}{1-\hat{p}}}\\
&= p_Ve^{\frac{-p_V}{1-\hat{p}}}\\
&\geq \rho e^{\frac{-\rho}{1-\hat{p}}}.\\
\end{aligned}
\end{equation}
\end{proof}

With the robust communication guarantee, we next prove that BLOWN can support a successful leader election, which is the core of the protocol. 
Most leader election algorithms in wireless networks are only responsible for reaching the state at which one node is the leader and others are followers. Our algorithm goes one-step further by ensuring that all nodes have an identical view of the network after leader election, which is crucial to the our protocol, as shown in Theorem~\ref{thm:leader:success}.

\begin{theorem}
\label{thm:leader:success}{(Successful leader election).} Let $w_{max}$ be the maximum weight among all nodes and $\lambda$ be a constant to be determined.
   $\mathcal{\pi}_{\rm B}[\mathcal{F}_{\rm SIG}, \mathcal{F}_{\rm SORT}]$ ensures a successful leader election while satisfying the following three properties: 1) \textbf{Correctness}: only one node is left as the leader with a positive $l_v$ at the end of $P_1$; 2) \textbf{Efficiency}: the success of leader election can be achieved with $O(w_{max})$ successful transmission; 3) \textbf{Practicality}: the leader and the followers should have the knowledge regarding who is the leader and at which round the leader is elected.
\end{theorem}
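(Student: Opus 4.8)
The plan is to establish the three properties---correctness, efficiency, and practicality---by first pinning down the dynamics of the leader counters $l_v$ and then invoking the robust-channel guarantee of Theorem~\ref{thm:jamming:resistant}. The key structural observation I would record first is that, throughout $P_1$, a counter $l_v$ changes only inside the PoC subroutine (slot one) and only when $v$ \emph{receives} a message; the slot-two confirmation and the $c_v\ge T_v$ bookkeeping never touch $l_v$. Since in the SINR model a clean single-sender reception occurs only when exactly one node transmits and that node itself does not receive, every \emph{successful transmission} round decrements the counters of all listening potential leaders while leaving the transmitter's counter intact. Two immediate invariants follow: the number of potential leaders $m=|\{v:l_v>0\}|$ is non-increasing (counters only go down), and it never falls below $1$ (the unique transmitter in a successful round is a potential leader whose counter is preserved, while collision and idle rounds cause no decrement). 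I would also note that only potential leaders transmit in slot one, so every slot-one decrement is charged to exactly one transmitter.

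For \textbf{correctness} I would run a counting argument on transmitter slots. Over any window of $K$ successful transmission rounds there are exactly $K$ transmitter slots, one per round. If a node $v$ is still a potential leader after these $K$ rounds, its counter survived, so $v$ must have been the transmitter in at least $K-l_v^0\ge K-w_{max}$ of them. Summing over all surviving potential leaders and using that the total number of transmitter slots is $K$ gives $m\,(K-w_{max})\le K$; choosing $K=3w_{max}$ forces $m\le 1$. Combined with the invariant $m\ge 1$, this shows that after $O(w_{max})$ successful rounds exactly one potential leader remains, and---because the slot-two rule only ever declares a \emph{unique} transmitter as leader---exactly one node ends $P_1$ with $l_v>0$.

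For \textbf{efficiency} I would convert successful rounds into elapsed rounds via Theorem~\ref{thm:jamming:resistant}: since $N_{R_0}(v)\neq\varnothing$ and the adversary is $((1-\epsilon),T)$-bounded, a constant fraction $(1-\epsilon\beta')\rho e^{-\rho/(1-\hat{p})}$ of rounds are successful w.h.p. once at least $F=\Omega((T\log N)/\epsilon+(\log N)^4/(\gamma\epsilon)^2)$ rounds have elapsed. Hence $\Theta(w_{max})$ successful rounds accrue within $O(w_{max})$ rounds, and after absorbing the channel lower bound and the Chernoff slack into the constant $\lambda$, leader election completes within $O(w_{max}\lambda)$ rounds w.h.p. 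For \textbf{practicality} I would analyze slot two once a unique potential leader $u$ exists. A follower broadcasts in slot two \emph{iff} it failed to receive a clean single-sender message ($\mathcal{I}+\mathcal{N}<\theta$) in slot one; therefore $u$ senses an idle channel in slot two precisely when \emph{no} in-range node transmits, i.e.\ when every honest neighbor received $u$'s slot-one message cleanly. Thus $u$ sets itself as leader if and only if all honest nodes simultaneously recognize $u$ and transition to $P_2$ with the same $i_k=i$, yielding the identical global view on both the identity of the leader and the round of its election.

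The step I expect to be the main obstacle is the slot-two consistency argument under adversarial jamming. I must rule out a ``split view'' in which $u$ believes it is the leader while some honest follower does not (or vice versa): the crux is that any follower not receiving $u$ cleanly is forced to transmit in slot two, which $u$ would then sense as a busy channel and abort its promotion---so jamming can only \emph{delay} the decisive round, never create disagreement. Making this airtight requires carefully combining the SINR reception conditions with the $((1-\epsilon),T)$-boundedness to guarantee that a clean confirmation round for the unique leader actually occurs within the claimed horizon, and checking that the residual cases (two simultaneous slot-one transmitters, or the adversary jamming $u$ in slot two) are absorbed by the successful-round accounting rather than breaking correctness.
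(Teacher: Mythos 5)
Your proposal is correct and follows the paper's overall decomposition (three properties; Theorem~\ref{thm:jamming:resistant} to convert successful rounds into elapsed rounds; a slot-two analysis for mutual recognition), but your correctness argument takes a genuinely different and more quantitative route. The paper proves correctness by contradiction: if two potential leaders broadcast in the same round they cannot both receive, so the survivor count can never drop below one and the contention ``inevitably'' ends with a single survivor; it then simply asserts in the efficiency part that $O(w_{max})$ successful transmissions suffice. Your charging argument --- each successful round has exactly one transmitter slot, a surviving potential leader can miss being the transmitter at most $l_v^0\le w_{max}$ times, hence $m(K-w_{max})\le K$ and $K=3w_{max}$ forces $m\le 1$ --- proves the same terminal state \emph{and} supplies the missing justification that $\Theta(w_{max})$ successful rounds are enough, which is the quantitative bridge the paper's efficiency claim silently relies on. Your practicality argument (a follower transmits in slot two iff it did not receive a clean single-sender message, so the candidate's idle-channel test in slot two certifies unanimous reception) is logically the same as the paper's case analysis via $p_v^{(0)},p_v^{(1)},p_v^{(2)}$ and $p_f^{(0)},p_f^{(1)}$, just phrased as an if-and-only-if rather than as conditional probabilities forced to zero. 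The ``split view'' obstacle you flag is real but resolves exactly as you suspect --- jamming in slot two makes the channel busy, so it can only delay the promotion, never cause disagreement --- and the paper does not treat it any more carefully than you do.
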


\begin{proof}
We prove the three properties in order. During a leader election process, all nodes contend for broadcasting messages in $P_1$ until only one node is left with a positive $l_v$, which can always be achieved inevitably. This can be proved by contradiction. Without loss of generality, we assume that there are two nodes left with a positive $l_v$. If these two nodes broadcast messages at the same round, they can not receive messages from each other simultaneously. Therefore, there is no chance for two nodes to receive messages in the same round, and there must be only one node surviving at the end. One can trivially expand this result to the cases with $3,4,\cdots,N$ nodes left with positive $l_v$ values, thus proving that the protocol can always lead to the state when only one node survives as the leader with a positive $l_v$.


To prove the efficiency property, we resort to Theorem~\ref{thm:jamming:resistant}, which shows that a constant fraction of the rounds have successful transmissions w.h.p. Concretely, a successful communication should happen once every $\lambda=(1-\epsilon\beta)^{-1}\rho^{-1}e^{\frac{\rho}{1-\hat{p}}}$ rounds on average w.h.p. Then leader election can be finished in $O(w_{max}\lambda )$ rounds w.h.p. This indicates that $O(w_max)$ number of successful transmissions can lead to a successful leader election and the communication complexity is not directly related to the network size.

To prove the practicality, we denote $E_v$ as the event that $v$ broadcasts a message in slot one and senses an idle channel in slot two. In this case, $v$ would know itself as the leader. 
Let $p_v$ be the probability that $v$ broadcasts a message in slot one, $p_v^{(0)}$ be the probability that $v$ broadcasts a message and there is also at least one node $u$ with $l_u>0$ broadcasting a message in slot one, $p_v^{(1)}$ be the probability that $v$ broadcasts a message and there exists at least one node $u$ with $l_u>0$ sensing the channel in slot one, and $p_v^{(2)}$ be the probability that $v$ broadcasts a message in slot one and $l_u=0, \forall u\in V\backslash \{v\}$. Certainly, $p_v=p_v^{(0)}+p_v^{(1)}+p_v^{(2)}$. 
If $E_v$ happens, $v$ senses an idle channel in slot two. Then $p_v^{(0)}=0$ since if $u$ broadcasts a message in slot one, a follower $f$ senses interference and thus broadcasts an $m$ in slot two so that $v$ senses interference in slot two, which contradicts our assumption. Also, $p_v^{(1)}=0$ because if there exists a node $u$ with $l_u>0$ sensing the channel in slot one, $u$ has to broadcast a message in slot two which also contradicts the assumption. Therefore, we obtain the result that if $E_v$ happens, $v$ can confirm itself as the unique leader. 

Correspondingly, we denote $E_f$ as the event that a follower $f$ recognizes $v$ as the leader when $f$ receives a message from $v$ and obtains $\mathcal{I+N}<\theta$ in slot one, then senses an idle channel in slot two. 
Let $p_f$ be the probability that $f$ receives a message from $v$ and obtains $\mathcal{I+N}<\theta$ in slot one, $p_f^{(0)}$ be the probability that there is at least one node $u\in V\backslash \{v\}$ with $l_u>0$ sensing the channel in slot one, $p_v^{(1)}$ be the probability that $v$ is the unique leader; then we have $p_f=p_f^{(0)}+p_f^{(1)}$. Assume $E_v$ happens, we have $p_f^{(0)}=0$ since if $p_f^{(0)}\neq 0$, $u$ has to broadcast a message in slot two and thus a follower senses interference, which contradicts our assumption. As a result, $E_f$ indicates that $v$ is the unique leader. Additionally, the round at which a successful leader election happens can be found when $E_v$ and $E_f$ occur simultaneously, which ends the proof of the third property. 
\end{proof}

Utilizing Theorem~\ref{thm:time} and \ref{thm:leader:success} as intermediate conclusions, we are finally ready to prove  Theorem~\ref{thm:chain:growth}.

\begin{proof}
   The time between \texttt{LEADER} and \texttt{COMMIT} is fixed to $j=c\cdot i_k$ rounds, where $c$ is an adjustable constant parameter according to different implementation scenarios. If the leader does not broadcast a block in the $(c+1)i_k$-th round, the state transits to the final state since $\mathbb{S}$ satisfies the second condition of a \texttt{FINAL} state. 
   Then $\mathbb{S}$ starts the next epoch. According to Theorem~\ref{thm:leader:success}, each epoch can be terminated in $O(cw_{max}\lambda)$ w.h.p., and there should be $O(cw_{max})$ transactions added to the blockchain in each epoch w.h.p. 
   
   Assume an honest node $v$ generates the transaction $tx^j_i$ and contends to broadcast it in $P_2$. The transaction can be received by an honest leader with probability at least $p=cw_{max}/N$ in each epoch. By applying the Chernoff bound, we obtain that $tx^j_i$ can be added to the blockchain within $n$ epochs with probability at least $1-e^{-\frac{(np-1)^2}{2}}$, where $n$ is the number of epochs when $v$ broadcasts $tx^j_i$. The above analysis indicates that $\mathbb{S}$ has no chance of staying at a  deadlock in any epoch. Considering the assumption that honest nodes control more than $50\%$ coins, $\mathcal{\pi}_{\rm B}[\mathcal{F}_{\rm SIG}, \mathcal{F}_{\rm SORT}]$ with ideal functionalities $[\mathcal{F}_{\rm SIG}, \mathcal{F}_{\rm SORT}]$ can ensure a fair sortition based on the nodes' coin distribution. Thus, with probability at least $50\%$, an honest node can be selected as a leader to propose a new block. When  two chains $\mathcal{C}_1, \mathcal{C}_2$ possessed by two honest nodes at the onset of two epochs $e_1<e_2$ with $e_2$ at least $k$ epochs ahead of $e_1$, it holds that $len(\mathcal{C}_2)-len(\mathcal{C}_1)\geq \tau\cdot k$, where $\tau=0.5$. This completes the proof of Theorem~\ref{thm:chain:growth}.

\end{proof}

\subsubsection{Common Prefix}
\label{sec:subsub:common:prefix}

\begin{theorem}[]
\label{thm:common:prefix}
    $\mathcal{\pi}_{\rm B}[\mathcal{F}_{\rm SIG}, \mathcal{F}_{\rm SORT}]$ satisfies the common prefix property (following Definition~\ref{def:common:prefix}).
\end{theorem}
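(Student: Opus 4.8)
The plan is to show that BLOWN produces a single, fork-free blockchain on which all honest nodes keep mutually consistent local copies, from which the common prefix property of Definition~\ref{def:common:prefix} follows almost immediately. The structural facts I would invoke are (i) the epoch-based execution, which guarantees that at most one block is appended per epoch, and (ii) the successful leader election of Theorem~\ref{thm:leader:success}, which guarantees that each epoch has a unique node with positive $l_v$ acting as leader. Together these rule out two distinct blocks ever being finalized at the same height, so the honest nodes' chains can differ only in length, never in content at any shared height.

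First I would establish the no-equivocation property that is special to the wireless setting. Because block finalization is a single physical broadcast of $(m_B,\sigma_B)$ over the shared SINR channel, every honest node within distance $R_0$ of the leader (all of $D_{R_0}(v)$, and hence all of $V$ in this single-hop network) either decodes the same block $B_u^{k}$ when $SINR\ge\beta$, or fails to receive anything when jammed or when interference dominates; it is physically impossible for a leader to deliver different blocks to different honest nodes in one slot. The signature functionality $\mathcal{F}_{\rm SIG}$ together with the sortition check $\mathcal{F}_{\rm SORT}$ further ensures that an honest node appends a block only if it carries a valid signature and a valid leader proof, so no forged or off-leader block is ever accepted. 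Hence at every height the honest nodes split into those holding the unique $B_u^{k}$ and those holding nothing beyond height $k-1$.

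Next I would bound the extent of any temporary divergence and conclude. Since \emph{Append} is monotone and no forks exist, any two honest chains are nested: one is a prefix of the other. Taking $\mathcal{C}_1,\mathcal{C}_2$ at the onsets of $e_1<e_2$, monotone growth plus nesting yields $\mathcal{C}_1^{\lceil k}\preceq\mathcal{C}_2$ once we drop the at most $k$ trailing blocks of $\mathcal{C}_1$ whose finalization broadcasts may not yet have reached every honest node, where $k$ is a small constant. To pin down $k$, I would lean on Theorem~\ref{thm:jamming:resistant}: because the leader's finalization broadcast enjoys the same jamming-resistance guarantee as any transmission, it succeeds at all honest receivers w.h.p. within the finalization window, so w.h.p. the system reaches the \texttt{FINAL} state uniformly—either all honest nodes accept $B_u^{k}$ (condition (1)) or none do (condition (2))—keeping $k$ bounded by a constant.

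The main obstacle is precisely the adversary's ability to inject location-dependent noise $\mathcal{ADV}(v)$ and thereby jam the finalization broadcast at some honest nodes but not others, which would make one honest chain strictly longer than another at the same height and threaten consistency. I would dispatch this using the single-hop topology together with Theorem~\ref{thm:jamming:resistant}: because the $((1-\epsilon),T)$-bounded adversary cannot suppress the robust channel, the probability that a finalization broadcast reaches a strict, nonempty subset of honest nodes is negligible, so such divergence is confined to the trailing $k$ blocks, and a union bound over any polynomial number of epochs keeps the overall failure probability at $1/N^{\Omega(1)}$. This reduces the common prefix property to the already-proven guarantees of Theorems~\ref{thm:jamming:resistant} and \ref{thm:leader:success} and closes the argument.
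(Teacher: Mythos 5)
Your proposal takes a fundamentally different route from the paper's, and it has two genuine gaps. The paper does \emph{not} argue fork-freeness; it explicitly concedes that a malicious node who legitimately wins the leader election can ``diverge the global distributed ledger'' and cause a fork, formalizes this via $\Delta$-forks and a divergence measure, proves a lemma that common prefix is equivalent to all pairwise divergences being at most $k$, and then bounds the probability of a $k$-deep divergence by modeling the race between the honest chain and the adversarial chain as a biased random walk with step distribution $\Pr[X_i=1]=(1-\epsilon)/2$ (biased because the adversary holds less than $50\%$ of the coins), obtaining a failure probability of $e^{-O(k)}$ via Chernoff. Your proof never invokes the sub-$50\%$ stake assumption at all, yet that assumption is the load-bearing element: without it, an adversary who repeatedly wins leadership could extend a non-canonical parent for arbitrarily many epochs, and nothing in the signature or sortition checks (which only certify that the proposer was a legitimately elected leader) prevents this. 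Your argument that honest chains are always nested silently assumes every elected leader proposes a block extending the canonical tip, which is exactly the behavior a Byzantine leader need not exhibit.

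The second gap is the appeal to Theorem~\ref{thm:jamming:resistant} to claim the finalization broadcast reaches either all honest nodes or none w.h.p. That theorem is an amortized guarantee: over a frame of $F=\Omega((T\log N)/\epsilon+(\log N)^4/(\gamma\epsilon)^2)$ rounds, a constant fraction of rounds carry successful transmissions. It says nothing about any \emph{particular} round, and the $((1-\epsilon),T)$-bounded adversary is entirely free to spend part of its jamming budget on the single round $r^k_{2,c\cdot i_k}$ in which the block is broadcast, and to do so non-uniformly across receiver locations via $\mathcal{ADV}(v)$. Indeed the paper's own \texttt{FINAL}-state definition anticipates that the broadcast may fail (condition (2)). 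So the step where you confine divergence to a constant number of trailing blocks does not follow from the cited theorem. If you want to salvage your physical-layer no-equivocation idea, you would still need the stake-based random-walk argument (or something equivalent) to bound how long an adversarially-led fork can keep pace with the honest chain; that is precisely the content of the paper's proof.
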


\begin{proof}
In $\mathcal{\pi}_{\rm B}[\mathcal{F}_{\rm SIG}, \mathcal{F}_{\rm SORT}]$ a node can directly append a new block $B_u^{k}$ to its local blockchain only when $\mathcal{F}_{\rm SIG}$ and $\mathcal{F}_{\rm SORT}$ answer with 1 when being queried. Therefore, an adversary who intends to disguise itself as a leader to propose a block should fail since it cannot break $[\mathcal{F}_{\rm SIG}$ and $\mathcal{F}_{\rm SORT}]$. However, a malicious leader (an adversary who wins the leader election) can still diverge the global distributed ledger to cause a $\Delta$-fork defined in Definition~\ref{def:fork}. 

\begin{figure}[!htbp]
\centering
\includegraphics[width=3.2in]{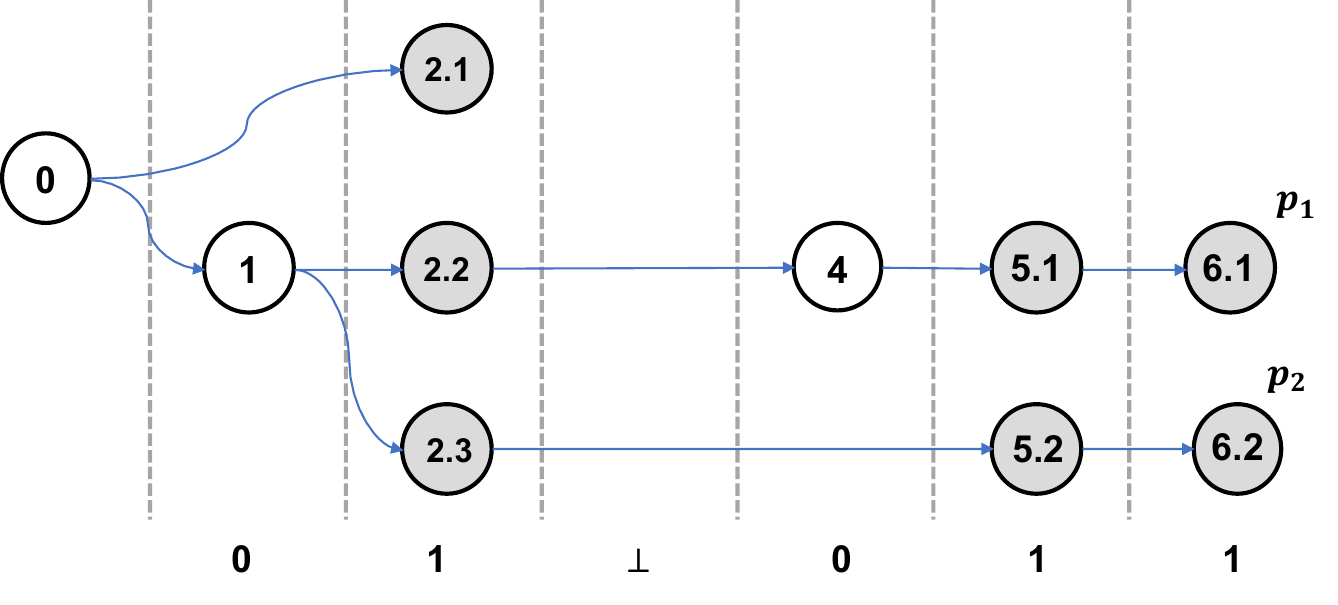}
\caption{
A simple example to illustrate the concepts of string, $\Delta$-fork, and divergence. Specifically, we have $s_6=\{0,1,\perp,0,1,1\}$, $l(p_1)=5$, and $l(p_2)=4$. The tree presented here is a $\Delta$-fork with $\Delta=1$, and $|div(p_1, p_2)|=4$. 
}
\label{fig:fork}
\end{figure}

\begin{definition}[String]
    Consider an epoch $e_k$ during the execution of functionality $\mathcal{\pi}_{\rm B}[\mathcal{F}_{\rm SIG}, \mathcal{F}_{\rm SORT}]$ with adversary $\mathcal{A}$ and environment $\mathcal{Z}$. Let $S=\{e_1, \cdots, e_k\}$ denote a sequence of epochs of length $k$. The string $s=\{0,1, \perp\}^k$ of $S$ is defined so that $s_i=1$ if the adversary controls the epoch leader of $e_i$ and broadcasts a block, $s_i=0$ if an honest node controls the epoch and broadcasts a block, and $s_i=\perp$ if no block is broadcast. We say that the index $i$ is adversarial if $s_i=1$ and honest otherwise. 
\end{definition}

W.l.o.g., let $s_0=0$ for $e_0$ meaning that the genesis block has an honest index.

\begin{definition}[$\Delta$-Fork]
\label{def:fork}
    Let string $s=\{0,1, \perp\}^k$ of $S$ and $\Delta$ be a non-negative interger. A $\Delta$-fork is a directed, acyclic, rooted tree $F=(V, E)$  in which  the two longest paths $p_1$ and $p_2$ satisfy $|l(p_1)-l(p_2)|\leq \Delta$, where a path $p$ refers to a road from the root to a leaf and $l(p)$ is the hop-count (length) of the path $p$. 
\end{definition}

\begin{definition}[Divergence]
    Denote the divergence of two paths $p_1$ and $p_2$ in a $\Delta$-Fork as $div(p_1, p_2)$, which is defined as 
    \begin{equation}
        div(p_1, p_2)=\max \{ l(p_1), l(p_2)\}-l(p_1\cap p_2),
    \end{equation}
    where $l(p_1\cap p_2)$ is the legnth of the common path of $p_1$ and $p_2$, and $div(p_1, p_2)$ is non-negative.
\end{definition}

\begin{lemma}
    The common prefix property  is satisfied if and only if for any pair of paths $p_i, p_j, i\neq j$, in a $\Delta$-fork, $div(p_i, p_j)\leq k$. 
\end{lemma}
\begin{proof}
    For the ``only if'' direction, we assume that there exits a path $p_1, p_2$ (w.l.o.g., $l(p_1)>l(p_2)$) such that $div(p_1, p_2)> k$. That is $\max \{ l(p_1), l(p_2)\}-l(p_1\cap p_2)=l(p_1)-l(p_1\cap p_2)>k$. Let $V_1$ ($V_2$) be the set of honest nodes that store the distributed ledger as the path $p_1$ ($p_2$). Once querying a local blockchain, any $v_1\in V_1$ ($v_2\in V_2$) responds with $\mathcal{C}_1$ ($\mathcal{C}_2$). Denote the latest point of the common path $p_1\cap p_2$ as $\hat{v}$, which is also called a bifurcation point. The path $p_1^{\lceil k}$ that is obtained by truncating the last $k$ vertices of $p_1$ still covers $\hat{v}$, which is not the endpoint of $p_1$ since $l(p_1)-k>l(p_1\cap p_2)$. Denote the endpoint of $p_1$ as $end(p_1)$. Then the blocks corresponding to the points from $\hat{v}$ to $end(p_1)$ are included in $\mathcal{C}_1^{\lceil k}$, but the block mapped to $end(p_1^{\lceil k})$ is not included in $\mathcal{C}_2$, thus violating the common prefix property. For the ``if'' direction, assuming that the common prefix is violated, there exists a pair of ledgers $\mathcal{C}_1$ and $\mathcal{C}_2$ for $e_1<e_2$ such that $\mathcal{C}_1^{\lceil k} \npreceq \mathcal{C}_2$. Mapping such blockchains to two distinct paths $p_1, p_2$, the endpoint $end(p_1^{\lceil k})$ corresponding to the latest block in $\mathcal{C}_1^{\lceil k}$ is not covered by $p_2$ and comes after $\hat{v}$. By the definition of divergence, $div(p_1, p_2)=\max \{ l(p_1), l(p_2)\}-l(p_1\cap p_2)>k$.
\end{proof}

Here one can define a common prefix violation as the case when there exit two paths $p_1, p_2$ in a $\Delta$-fork with $|div(p_1, p_2)|>k$. To prove Theorem~\ref{thm:common:prefix}, we need to show that a common prefix violation happens with an extremely small probability. Generally speaking, $\Delta\leq k$, and $p_1, p_2$ can be regarded as the respective paths that the honest nodes and adversary go through. This is based on the assumption that all honest nodes strictly follow the longest chain rule, while the adversary focuses on increasing the length of an illegal chain (e.g., including a double-spend transaction). Therefore, a common prefix violation can also be interpreted as a race between honest nodes and the adversary that lasts for more than $k$ blocks, but their view paths still follow $|l(p_1)-l(p_2)|\leq \Delta$. Let $X_i\in \{\pm 1\}$ (for $i=1, 2, \cdots$) denote a series of independent random variables for which $Pr[X_i=1]=(1-\epsilon)/2$. Note that $\epsilon\in(0,1)$ is satisfied in functionality $\mathcal{\pi}_{\rm B}[\mathcal{F}_{\rm SIG}, \mathcal{F}_{\rm SORsT}]$ since the adversary controls less than 50\% coins and the protocol adopts a hybrid $[\mathcal{F}_{\rm SIG}, \mathcal{F}_{\rm SORT}]$ to ensure that the probability of the adversary being a leader is less than 1/2. Consider $k$ epochs of the biased walk beginning at the bifurcation point. The resulting value is tightly concentrated at $-\epsilon k$. By applying the Chernoff bound, for each $k$ random walk hitting problem, we have 
\begin{equation}
    Pr[X_k<\Delta]\leq e^{-(1-\Delta/\epsilon k)^2\epsilon k/2}=e^{-O(k)},
\end{equation}
where $\Delta \ll k$. This indicates that $\mathcal{\pi}_{\rm B}[\mathcal{F}_{\rm SIG}, \mathcal{F}_{\rm SORT}]$ satisfies the common prefix property w.h.p., which completes the proof.
\end{proof}

\subsubsection{Chain Quality}
\label{sec:sub:sybil}

The chain quality property requires that a certain fraction of the blocks should satisfy high quality standards (high-quality blocks are the ones generated absolutely by honest nodes). Chain quality can be threatened by Sybil attacks which are particularly harmful  in wireless networks \cite{newsome2004sybil}. In a Sybil attack, an attacker can behave as many nodes by illegitimately claiming massive identities or impersonating others. A successful attacker chosen as a leader can deny to broadcast a new block or broadcast an invalid block. Since honest nodes can neither wait for more than $c\cdot i_k$ rounds in $P_2$ nor accept invalid blocks, the attacker cannot hinder the system from changing from \texttt{LEADER} state to the \texttt{FINAL} state. However, an attacker can make an epoch wasted without any new block being added to the blockchain, thereby harming the liveness. Our BLOWN protocol prevents Sybil attacks and ensures liveness under the assumption that all malicious nodes control no more than $50\%$ coins of the entire network. 

Consider one epoch. $\mathcal{F}_{\rm SORT}$ provides a binomial distribution as $B(k;w_v,p)=\binom{w_v}{k}p^k(1-p)^{w_v-k}$, which has a salient property that splitting coins into multiple sub-users does not give attackers any advantage. In particular, suppose an attacker splits its account balance $w_\mathcal{A}$ into $w^1_\mathcal{A}, w^2_\mathcal{A}, \cdots, w^n_\mathcal{A}$, thus each sub-user has a binomial distribution as $X^i_\mathcal{A}\sim B(w^i_\mathcal{A},p)$. However, splitting coins does not increase the sum of the values of the leader counter controlled by the attacker since $(X^1_\mathcal{A}+X^2_\mathcal{A}+\cdots+X^n_\mathcal{A})\sim B(w^1_\mathcal{A}+w^2_\mathcal{A}+\cdots+w^n_\mathcal{A},p)$. Also, splitting coins decreases the maximum of the leader counter of the sub-users, which makes it harder for a sub-user to survive in $P_1$. Without loss of generality, suppose each node has an equal value of balance. Then at each epoch, the probability of a malicious node being chosen as a leader is no more than $50\%$.

\begin{theorem}\label{thm:sybil}
  Given that the ratio of the adversarial coins $\alpha<1/2$, $\mathcal{\pi}_{\rm B}[\mathcal{F}_{\rm SIG}, \mathcal{F}_{\rm SORT}]$ satisfies the chain quality property with $\mu=1-(1+\delta)\alpha$, where $\delta\in(0,1)$.
\end{theorem}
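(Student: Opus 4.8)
The plan is to reduce the chain quality property to a concentration statement about the number of adversarial leaders over a window of consecutive blocks, and then invoke a Chernoff bound. The foundation is the anti-Sybil argument established in the preceding discussion: because $\mathcal{F}_{\rm SORT}$ draws leader counters from the binomial distribution $B(k;w_v,p)$ whose sum is preserved under any partition of an account's balance, an adversary holding a fraction $\alpha<1/2$ of the total wealth cannot inflate its chance of winning an epoch by splitting coins or spawning Sybil identities. I would first argue that in any single epoch the probability the elected leader is adversarial is at most $\alpha$: the fair sortition assigns leadership in proportion to coin holdings, and Theorem~\ref{thm:leader:success} guarantees that exactly one node survives $P_1$ as leader, so the adversary's share of the leadership probability is bounded by its share $\alpha$ of the wealth.

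Next I would set up the probabilistic model. Fix any portion of length $l$ of an honest party's chain at the onset of an epoch. For $i=1,\dots,l$ let $Y_i$ be the indicator that the $i$-th block of this portion was proposed by an adversarial leader, so that $\Pr[Y_i=1]\le\alpha$. Since each epoch performs a fresh sortition seeded by an independent random seed fed to $\mathcal{F}_{\rm VRF}$, the variables $\{Y_i\}$ are independent, and the number of adversarial blocks is $Y=\sum_{i=1}^{l}Y_i$ with $E[Y]\le\alpha l$.

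Applying the multiplicative Chernoff bound (treating the worst case $E[Y]=\alpha l$) then yields
\begin{equation}
\Pr\bigl[Y>(1+\delta)\alpha l\bigr]\le e^{-\delta^2\alpha l/3},
\end{equation}
which is exponentially small in $l$ and hence vanishingly small once $l=\Omega(\log N)$. Therefore, with high probability, the fraction of adversarial blocks in the window satisfies $Y/l\le(1+\delta)\alpha=1-\mu$, which is exactly the chain quality property of Definition~\ref{def:quality} with coefficient $\mu=1-(1+\delta)\alpha$, completing the argument.

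The main obstacle I anticipate is making the first step fully rigorous, namely pinning the per-epoch adversarial-leader probability to $\alpha$ rather than the crude $1/2$ bound stated under the equal-balance simplification. This requires coupling the leader-counter distribution produced by sortition with the channel-contention dynamics of $P_1$, so that the probability an adversarial node emerges as the sole survivor with positive $l_v$ is shown to scale with its wealth fraction $\alpha$. The delicate point is that survival in $P_1$ depends on the entire contention process of Theorem~\ref{thm:jamming:resistant} rather than on the sortition draw alone; one must therefore verify that the honest nodes' aggregate leader-counter mass dominates the adversary's in the same proportion as the wealth split, so that neither jamming nor the multiplicative probability adjustments skew the survivor distribution away from the underlying coin ratio.
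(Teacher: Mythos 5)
Your proposal matches the paper's proof essentially step for step: the paper likewise defines per-epoch indicators for an adversarial leader, bounds the expected count by $\alpha l$ using the coin-splitting-invariance of the binomial sortition, and applies the Chernoff bound to conclude the adversarial block ratio is at most $(1+\delta)\alpha$ w.h.p., giving $\mu=1-(1+\delta)\alpha$. The obstacle you flag at the end --- rigorously tying the per-epoch adversarial-leader probability to the wealth fraction $\alpha$ despite the $P_1$ contention dynamics --- is real but is equally left implicit in the paper, which simply asserts $\mathbb{E}[X_i]\le\alpha l$ after an equal-balance simplification.
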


\begin{proof}
Let $X_i$ denote the event where the $i$th epoch has an adversarial leader.  We have $\mathbb{E}[X_i]\le \alpha l$. Applying the Chernoff bound we obtain
\begin{equation}
\begin{aligned}
    P_{r}[X\geq (1+\delta)\alpha l]\leq e^{-O(l)}.
\end{aligned}
\end{equation}
Then the probability that the ratio $\beta$ for the blocks originated from the adversary is at most $(1+\delta)\alpha$ is given as 
\begin{equation}
\begin{aligned}
    P_{r}[\beta\leq(1+\delta)\alpha]=1-P_{r}[X\geq (1+\delta)\alpha l]\geq1-e^{-O(l)}.
\end{aligned}
\end{equation}
When $l$ is sufficiently large, $\beta\leq(1+\delta)\alpha$ w.h.p. Thus we complete the proof of the chain quality property with $\mu=1-(1+\delta)\alpha$. Note that even though $\mu=1-(1+\delta)\alpha$ blocks can be proposed by the adversary, these blocks only contain a small fraction of malicious ones (jointly ensured by the chain growth and common prefix properties). 
\end{proof}

Therefore we can conclude that $\mathcal{\pi}_{\rm B}[\mathcal{F}_{\rm SIG}, \mathcal{F}_{\rm SORT}]$ satisfies the chain growth, common prefix, and chain quality properties, thus guaranteeing persistence and liveness. By applying Theorem~\ref{thm:uc}, BLOWN (i.e., $\mathcal{\pi}_{\rm B}[\mathcal{\pi}_{\rm SIG}, \mathcal{\pi}_{\rm SORT}]$) naturally ensures persistence and liveness.

\section{Simulation Study}
\label{sec:simulations}


In this section, we implement a simulator to investigate how various parameters impact the performance of our BLOWN protocol. Specifically, in Section~\ref{subsec:sim:throughput}, we first demonstrate the correctness and efficiency of BLOWN by considering its convergence behavior as well as its performance when network size and density vary. Then we present the performance of BLOWN  under various jamming and Sybil attack scenarios in Section~\ref{subsec:jamming:attacks}. Note that the convergence behavior of BLOWN needs to be examined from a microscopic perspective and thus we consider a single epoch; while the performance of BLOWN should be explored from a macroscopic perspective and thus multiple epochs are considered.

In our simulation, we use the crypto library of golang\footnote{https://github.com/golang/crypto} and adopt ed25519 for digital signatures, with 64-byte private key, 32-byte public key, and 64-byte signature. Public keys are broadcast to all nodes on the preset of our simulations. Besides, The Sortition algorithm is implemented with the VRF provided by CONIKS\footnote{https://github.com/coniks-sys/coniks-go/tree/master/crypto/vrf}. We employ two types of 2-dimensional planes of size $d\times d$ units, where $d=10$ or $d=\sqrt{N}$ with $N$ known as the network size. Nodes are randomly generated and distributed in the plane and no two nodes can have the same coordinates. The unit for an epoch length is round. If not stated otherwise, we adopt the following parameters $\alpha=4$, $\beta=2$, $\theta=2$, $P=\beta\theta (\sqrt{2}d)^\alpha$, $\hat{p}=0.1$, $T=60$, $\epsilon=0.3$, $\gamma=0.1$, $w_v=20$, $\tau = W/2$, and $c=10$. Besides, nodes are uniformly distributed and the percentage of the Sybil nodes is 0\% by default. Without loss of generality, all parameter values are chosen carefully to reflect various real-world cases, but not to aim to optimize the performance. All the experiments are performed under a CentOS 7 operating system running on a machine with an Intel Xeon 3.4 GHz CPU, 120 GB RAM and 1 TB SATA Hard Drive. All the reported results are the average of 100 runs, unless stated otherwise. 

The performance metrics under our consideration include throughput and average epoch length. We choose epoch length as a performance metric since it depicts how many rounds BLOWN takes to accept or discard a block. Denote by $|txp_t|$ the number of transactions received by the leader within $t=i+j$ rounds, with $i$ and $j$ respectively being the number of rounds in $P_1$ and $P_2$. Given that the unit slot time for IEEE 802.11 is set to be $50\mu s$, we have throughput as
\begin{equation}
\text{Throughput} = \frac{|txp_t|}{i\times 100\mu s + j\times 50\mu s}
\end{equation}
since $r^{k}_{1,i}$ has two slots while $r^{k}_{2,j}$ has only one slot.

\subsection{Correctness and Efficiency}
\label{subsec:sim:throughput}

\begin{figure}[!htbp]
\centering
\includegraphics[width=3.2in]{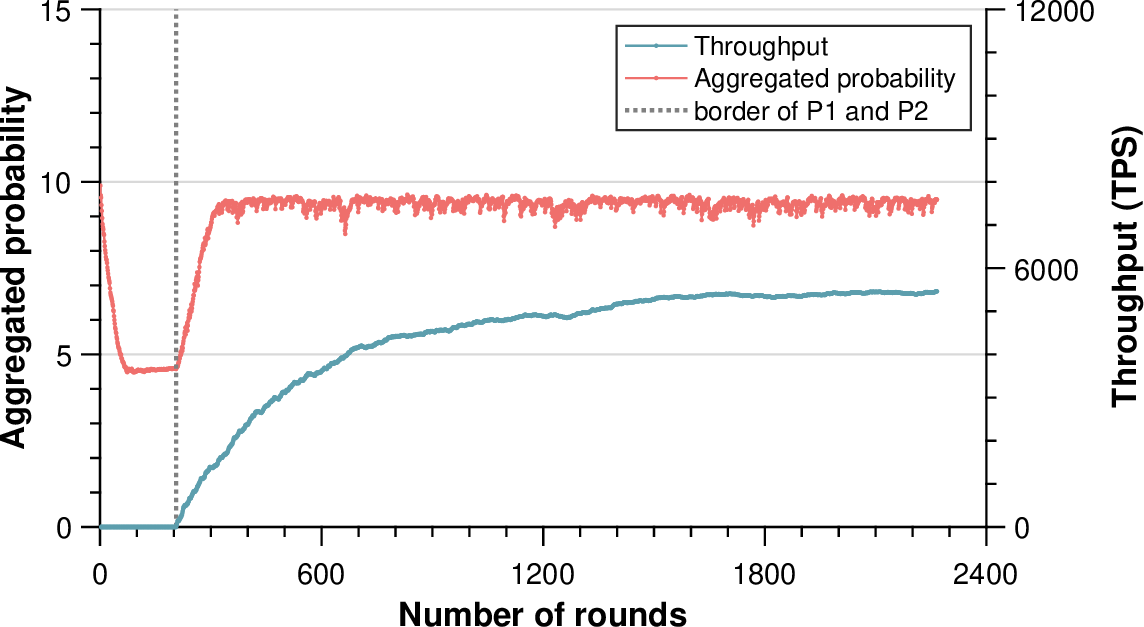}
\caption{Aggregated probability and throughput \emph{vs.} number of rounds, to demonstrate the convergence behavior, where $density=1, d=10, N=100$, and nodes are uniformly distributed.}
\label{fig:com}
\end{figure}

In this subsection we first demonstrate the convergence behavior of BLOWN then report its performance when network size and density vary. 

\textbf{Convergence Study.} 
Fig.~\ref{fig:com} presents a typical example to illustrate the convergence of the aggregated probability $p_V=\sum_{v\in V} p_v$ and throughput during one-epoch execution, where $p_V=N\times\hat{p}=10$ in the outset. There is a gray dash borderline distinguishing $P_1$ and $P_2$. Since BLOWN can rapidly adjust the initial parameters by multiplicatively increasing or deceasing $p_v$, $p_V$ adapts rapidly to reduce the noise in the channel to help achieve successful communications. Therefore, it only takes 206 rounds (corresponding to 0.206s in a real-world setting) to complete $P_1$. Such a quick adaptation contributes to the throughput of the entire protocol. In $P_2$, nodes all become active to broadcast transactions enabling $p_V$ to grow. The leader collects transactions from the 207th to the 2265th round, and a block is finalized at the 2266th round. Note that $p_V$ and throughput respectively converge to 5399 TPS and 9.37, which are mean values calculated from the last 500 rounds. Besides, we evaluate cryptographic overhead (in ms, an average of 1000 repeated trials), including the overhead of signing a transaction (0.09 ms), verifying a transaction [0.21 ms], signing a block (1.20 ms), confirming a block (930.14 ms), Sortition(3.02 ms), and VerifySortition(4.57 ms).

\begin{figure*}[!t]
\centering
\subfigure[Epoch length and throughput \emph{vs.} the network size $N$, where $density=1, d=\sqrt{N}\times\sqrt{N}$.]{
    \label{fig:size}
    \centering
    \includegraphics[width=3.2in]{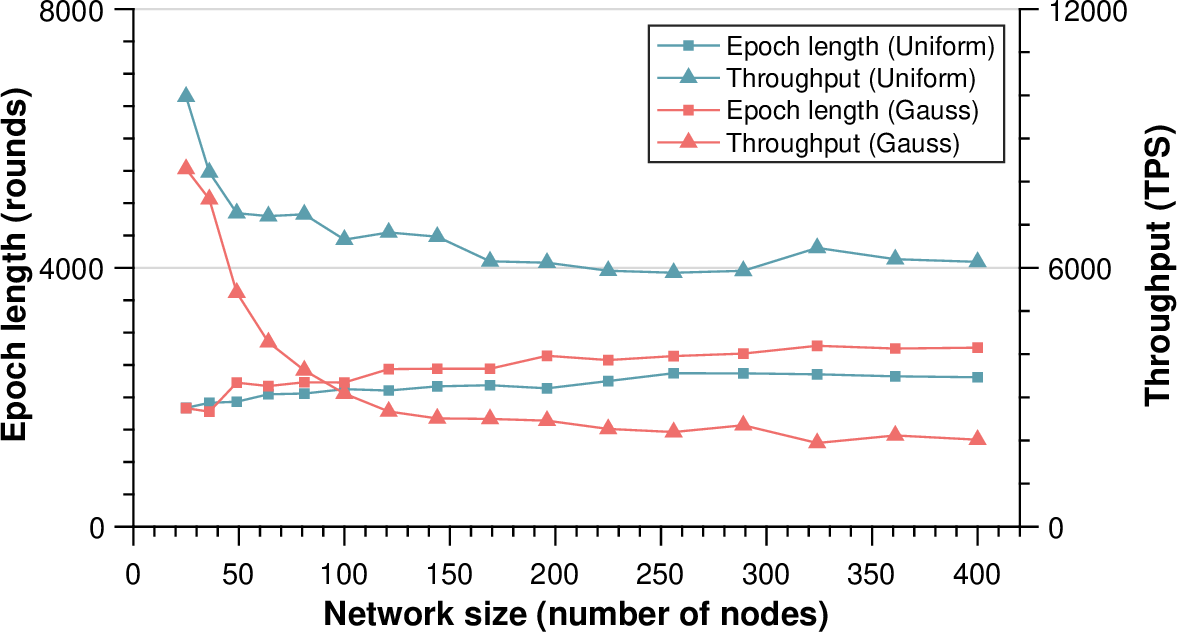}
}%
\quad
\subfigure[Epoch length and throughput \emph{vs.} the density, where $d=10, N=100$.]{
    \label{fig:density}
    \centering
    \includegraphics[width=3.2in]{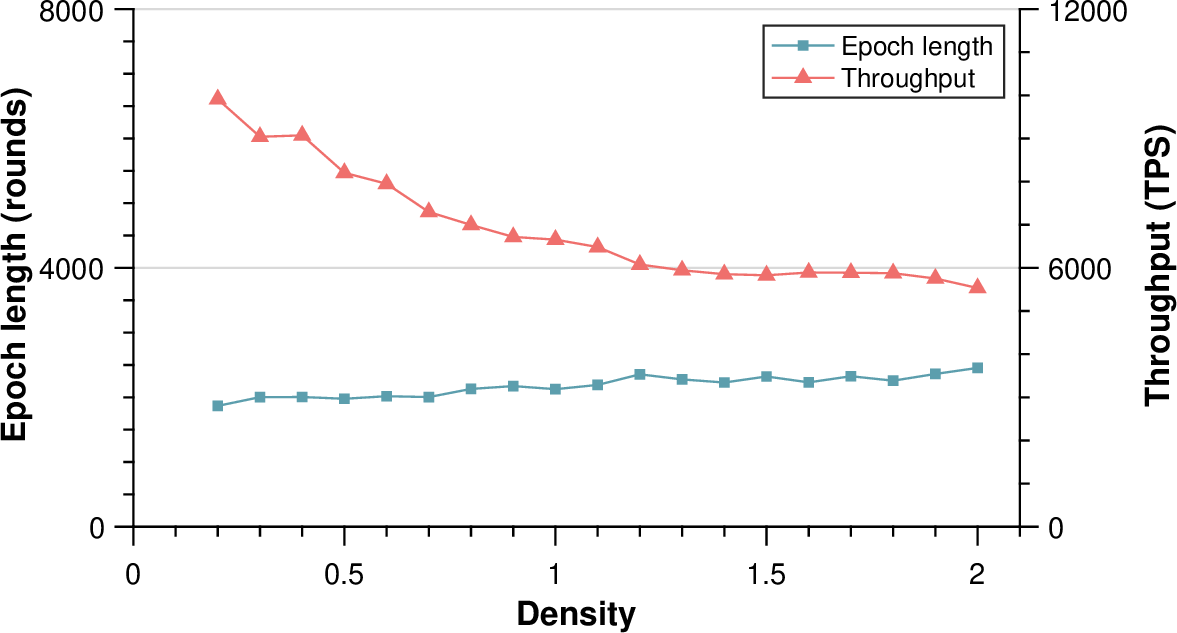}
}%
\caption{The performance of BLOWN \emph{vs.} network size (in a uniform or Gauss distribution) and density.}
\end{figure*}

\textbf{Performance \emph{vs.} Network Size.} Next we simulate the performance as a function of the network size (or $N$), where nodes are scattered in the plane of size $d=\sqrt{N}\times\sqrt{N}$ following a uniform or Gauss distribution. As shown in Fig.~\ref{fig:size}, the epoch length slowly increases with a larger $N$ with both uniform and Gauss distributions, which also means that the leader election costs more time for a larger $N$. On the other hand, throughput decreases with a larger $N$ since the added nodes lead to heavier contention. However, because of the resiliency of our jamming resistant channel, throughput can converge to about 6000 TPS and 2000 TPS for the uniform and Gauss distribution, respectively.
Compared with the uniform distribution, Gauss distribution always has a larger epoch length and lower throughput since denser nodes centrally aggregate, leading to stronger contention.

\textbf{Performance \emph{vs.} Network Density.} We also investigate how the network density impacts on the performance of the BLOWN protocol. Nodes are uniformly distributed in a $10\times 10$ plane, and $density = 0.2, 0,3, \cdots, 2$. As shown in Fig.~\ref{fig:density}, the epoch length slowly increases from the 1867 to the 2464 rounds, with the density rising tenfold. The throughput decreases for larger density and approximately converges to 6000 TPS.

\subsection{Jamming Attacks and Sybil Attacks}
\label{subsec:jamming:attacks}

\begin{figure*}[!htbp]
\centering
\subfigure[Epoch length and throughput \emph{vs.} $\epsilon$, where $density=1, d=10, N=100$.]{
    \label{fig:jammer}
    \centering
    \includegraphics[width=3.27in]{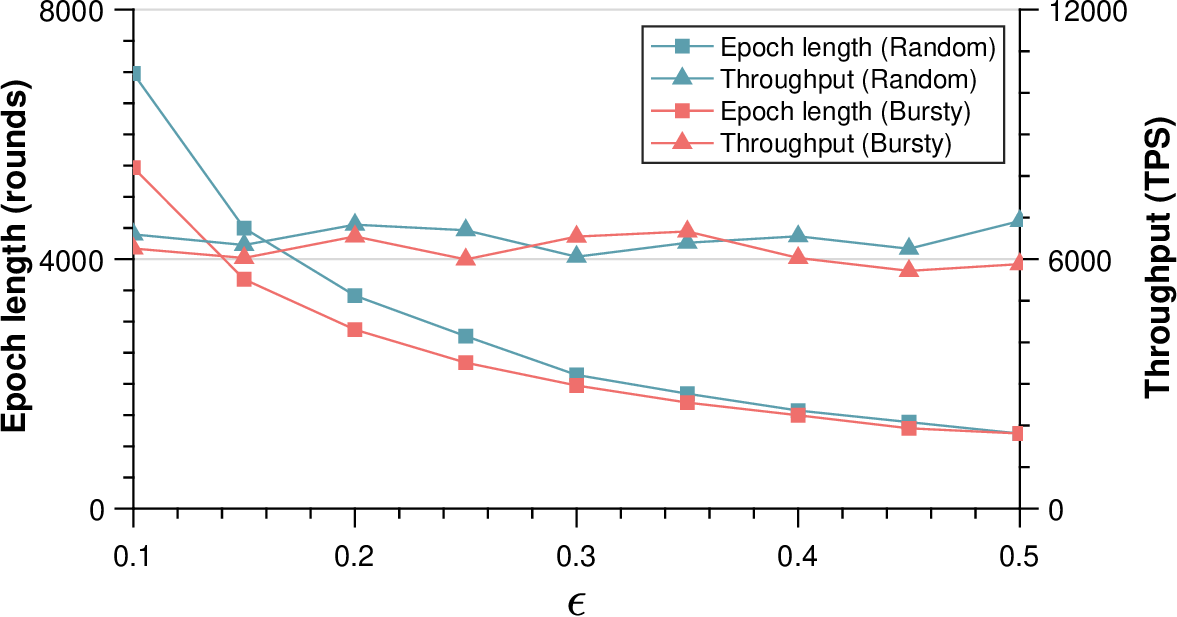}
}%
\quad
\subfigure[Epoch length and throughput with the percentage of Sybil nodes, where $density=1$, $d=10$, and $N=100$.]{
    \label{fig:sybil}
    \centering
    \includegraphics[width=3.2in]{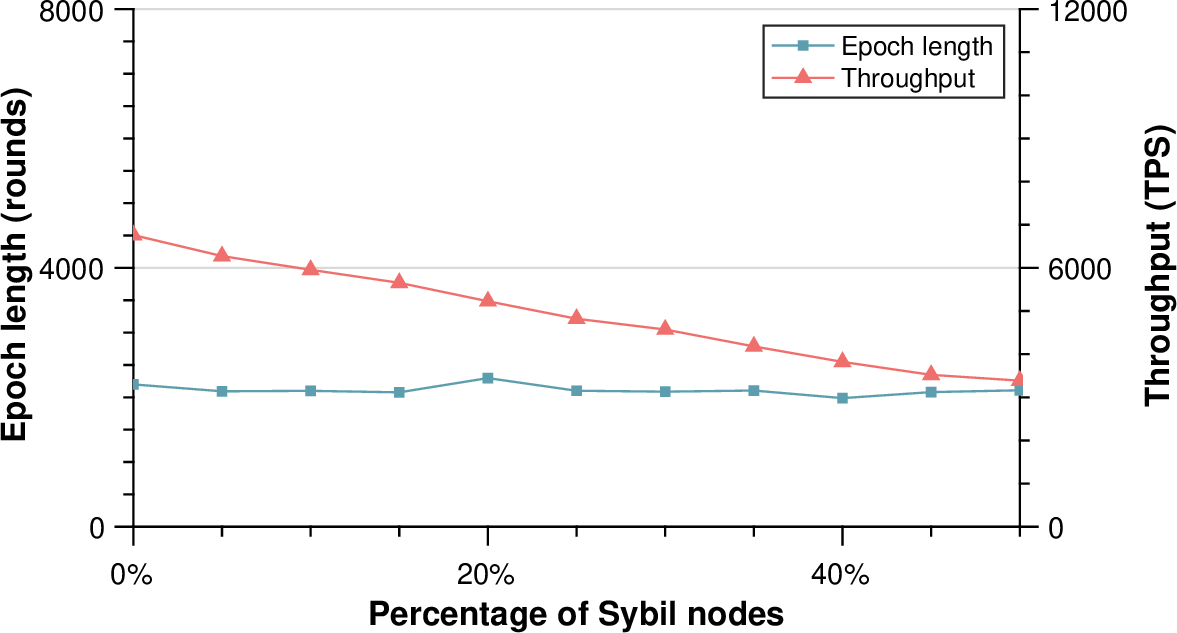}
}%
\caption{The performance of BLOWN when confronting jamming attacks (random jammers or bursty jammers) and Sybil attacks.}
\end{figure*}

\textbf{Jamming Attacks.} Here we present our protocol's performance when confronting jammers who can choose different strategies with the constraint of $(1-\epsilon)T$. We consider two types of jammers: \emph{random} jammers that can randomly jam $(1-\epsilon)T$ rounds at any interval of length $T$ and \emph{bursty} jammers who would jam  $(1-\epsilon)T$ consecutive rounds at any interval of length $T$. We test the epoch length and throughput when $\epsilon={0.1,0.15,\cdots,0.5}$, with a higher $\epsilon$ implying a lower attack frequency. The results are demonstrated in Fig.~\ref{fig:jammer}, which indicate that the epoch length decreases with the increasing $\epsilon$ due to the lower frequency of jamming attacks. Besides, $\epsilon$ does not significantly impact the throughput for both kinds of jammers. The epoch length increases faster with lower $\epsilon$ considering random jammers, indicating that random jammers are more powerful than bursty ones in BLOWN. This is because the introduction of $T_v$ makes it easier to address continuous heavy contentions.

\textbf{Sybil Attacks.} In a Sybil attack, an attacker can control massive malicious nodes that compete for being a leader but refuse to collect transactions and propose blocks. In this circumstance, the epoch with a malicious leader would be abandoned so that there is no valid block to be accepted within such an epoch. Even though we already show in our protocol analysis that BLOWN can defend against Sybil attackers who control less than 50\% wealth of the entire network, such attackers can harm the liveness of our protocol. In Fig.~\ref{fig:sybil}, the percentage of Sybil nodes does not impact the epoch length since Sybil nodes are not absent from competing in the leader election. However, the throughput has an evident linear decline for a larger percentage of Sybil nodes. Compared to the setting without Sybil nodes, $50\%N$ Sybil nodes would decrease the throughput by 49.90\%.

\section{Conclusion and Future Research}
\label{sec:conclusion}

In this paper, we propose a 2-phase blockchain protocol, namely BLOWN. BLOWN establishes a jamming-resistant communication channel and combines the Sortition algorithm and our newly proposed PoC consensus algorithm for efficient and secure leader election. Besides, BLOWN prevents double-spending attacks and Sybil attacks. Analysis and simulation results demonstrate the efficiency, effectiveness, and security properties of the BLOWN protocol. In our future research, we will investigate the multi-hop version of BLOWN, as well as the Byzantine fault-tolerant BLOWN in wireless ad hoc or fading channel settings. Also, it is neccessary to explore how practical attacks such as eclipse attacks, nothing-at-stake attacks, selfish-mining attacks can be mitigated by our protocol.

\section*{Acknowledgment}

This study was partially supported by the National Natural Science Foundation of China under grants 61832012, 61871466, 61771289 and 61672321, the Blockchain Core Technology Strategic Research Program of Ministry of Education of China under grant 2020KJ010301, and the Key Science and Technology Project of Guangxi under grant AB19110044. 

\bibliographystyle{IEEEtran}
\bibliography{references}

\end{document}